\documentclass[letterpaper, 10 pt, conference]{ieeeconf}

\IEEEoverridecommandlockouts
\usepackage{amsmath, amssymb, mathtools, mathrsfs}
\usepackage{mathdots}
\usepackage{cancel}
\usepackage[normalem]{ulem}
\usepackage{nicematrix}
\usepackage{makecell}
\usepackage{tikz}
\usepackage{nicematrix}

\usepackage{eso-pic}

\let\labelindent\relax
\usepackage{enumitem}

\usepackage{amsthm}

\newtheorem{theorem}{Theorem}
\newtheorem{lemma}{Lemma}
\newtheorem{proposition}{Proposition}
\newtheorem{corollary}{Corollary}
\newtheorem{assumption}{Assumption}

\newtheorem{remark}{Remark}

\DeclareMathOperator{\im}{im}
\DeclareMathOperator{\rank}{rank}
\DeclareMathOperator{\spanop}{span}
\DeclareMathOperator{\PE}{PE}

\newcommand{\Obsvm}[2]{\mathscr{O}_{#1}\!\left(#2\right)}
\newcommand{\Ctrbm}[2]{\mathscr{C}_{#1}\!\left(#2\right)}
\newcommand{\Signal}[2]{#1_{[#2]}}
\newcommand{\Lift}[3]{#1^{\langle#2\rangle}(#3)}
\newcommand{\LiftCT}[3]{#1^{\langle#2\rangle,\,{c}}(#3)}
\newcommand{\Hankelm}[2]{H_{#1}(#2)}
\newcommand{\floor}[1]{\left\lfloor #1 \right\rfloor}
\newcommand{\R}{\mathbb{R}}
\newcommand{\N}{\mathbb{N}}
\newcommand{\Z}{\mathbb{Z}}

\NiceMatrixOptions
{
    custom-line ={command= H, tikz= dashed, width= 1mm},
    custom-line = {letter= I, tikz= dashed, width= 1mm},
}
\NiceMatrixOptions{cell-space-limits=1pt}

\title{\LARGE \bf A note on input signal generators:\\
A relaxation of Willems' fundamental lemma in the SISO case}

\author{Yun Jeong Yang and Jin Gyu Lee
\thanks{This work was supported in part by the National Research Foundation of Korea
(NRF) grant funded by the Korea government (MSIT) (RS-2026-25504174) and the Strategic Hub for International Research Collaboration project of Seoul National University (SNU).}
\thanks{Y. J.~Yang and J. G.~Lee are with ASRI and the Department of Electrical and Computer Engineering, 
        Seoul National University, Korea (e-mail: {\tt\small yjyang@cdsl.kr} and {\tt\small jingyu.lee@snu.ac.kr}). Corresponding author: Jin Gyu Lee.}
}

\begin{document}

\maketitle
\thispagestyle{empty}
\pagestyle{empty}

\AddToShipoutPictureFG*{%
  \AtPageLowerLeft{%
    \put(42,35){%
      \parbox{0.86\paperwidth}{%
        \scriptsize
        \copyright~2026 IEEE. Personal use of this material is permitted.  Permission from IEEE must be obtained for all other uses, in any current or future media, including reprinting/republishing this material for advertising or promotional purposes, creating new collective works, for resale or redistribution to servers or lists, or reuse of any copyrighted component of this work in other works.
      }%
    }%
  }%
}

\begin{abstract}
We provide a practical relaxation of Willems' fundamental lemma for discrete-time linear time-invariant (single-input-single-output) systems. 
Instead of maintaining conventional Willems' persistency of excitation condition in the behavioral theory, we reformulate the problem in terms of signal generators, hence going back to the dynamical systems theory.
We discuss the relationship between the persistency of excitation order and the dimension of the signal generator.
Furthermore, we identify a necessary and sufficient condition on the signal generator that can generate informative input--output data for almost all systems and initial conditions.
This formulation accommodates both a broad range of conventional persistently exciting input designs and a class of shorter inputs that violate Willems’ condition, including sinusoidal sequences with fewer frequencies.
Finally, the signal generator perspective allows a natural extension to continuous-time systems.
\end{abstract}

{\section{Introduction}}
Various methodologies for identifying or controlling unknown systems must rely on input--output (and perhaps state) data.
In particular, for discrete-time linear time-invariant (LTI) systems, an essential lemma known as Willems' fundamental lemma states that 
suitable inputs can generate input--output data that can completely characterize the system's input--output behavior {\cite{i1}}.
This lemma laid the foundation of most of the data-driven control literature such as {\cite{ia1,ia2}}.

The main role of Willems' fundamental lemma can be divided into two aspects.
First, it determines when the collected data be sufficient to recover the system behavior.
We refer to such data---the finite number of input--output windows that spans the space of all possible input--output windows---as informative input--output data.
Subsequent studies have further clarified and refined the data conditions for entire system recovery (and other objectives) {\cite{ib1,ib2}}.
Second of all, Willems' fundamental lemma provides a guideline dedicated to designing inputs to collect informative input-output data.

According to Willems' fundamental lemma, we can obtain informative input--output data if the input signal is persistently exciting of sufficiently high order.
Since this condition is not a necessary condition, subsequent studies have investigated whether informative input--output data can be obtained under weaker conditions on inputs.
In particular, {\cite{ic1,ic2}} have shown that informative data can be collected using shorter input sequences by actively designing future inputs based on past input--output samples, which is known as an online method.
On the other hand, {\cite{i2,i3}} have shown that if Willems' condition is not satisfied, then some system fails to produce informative input--output data for at least one initial condition.
However, these results do not clarify whether such failures are generic or confined to exceptional initial conditions.
This motivates the question of whether, under offline input design, one can characterize a broader class of input sequences that violate Willems' condition 
while still producing informative input--output data for \emph{almost all} systems and initial conditions.

The Willems' condition, written in terms of persistency of excitation, characterizes an input through the rank of the Hankel matrix, but it does not explicitly describe the dynamics underlying an input generation. 
The condition is also formulated only for discrete-time inputs. 
To obtain a richer description of input sequences beyond the Willems' condition and thereby enable a finer analysis of input--output data, 
we focus instead on the dynamics that generates input. 
Specifically, we consider the autonomous LTI system
\begin{equation*}
    \begin{aligned}
        w(t+1) &= S_gw(t),\\
        u(t)&= L_gw(t),
    \end{aligned}
\end{equation*}
which we call a signal generator.
As will be shown in Section~III, 
this signal generator formulation covers a broad range of conventional persistently exciting inputs widely used in practice, 
including all multisine signals and almost all randomly generated signals.
It also allows for natural extension to the continuous-time counterpart.
Most importantly, it provides an explicit form of underlying dynamics of an input sequence.

In this respect, we propose to go back from behavioral theory to dynamical systems theory.
Shifting away from persistency of excitation, we direct our attention to the input signal generator and characterize the input conditions that make the input--output data informative.
This transition does more than merely restating Willems' fundamental lemma; 
it yields a practically meaningful relaxation.
For example, to recover the input--output behavior of a system with known order \(n\), a direct application of Willems' condition requires a signal generator of dimension at least \(2n+1\). 
In contrast, our result shows that dimension \(n+1\) is sufficient for almost all systems and initial conditions.

Meanwhile, this type of signal generator plays a crucial role in moment matching based model reduction theory {\cite{i4}}. 
It is known that if we generate inputs using a signal generator with sufficiently high dimension, the behavior of the system can be observed across a rich frequency band {\cite{i5}}.
This aligns well with the philosophy of Willems' fundamental lemma that a persistently excited input must be applied.

Taken together, this paper makes the following three contributions based on the signal generator interpretation of input generation:
\begin{enumerate}[label=(\alph*)]
    \item We establish a relationship between the persistency of excitation condition and the signal generator representation of inputs (Section~III).
    
    \item We characterize a necessary and sufficient condition under which a signal generator produces informative input--output data for \emph{almost all} systems and initial conditions (Section~IV).

    \item We derive a natural extension of the fundamental lemma to continuous-time LTI systems with a continuous-time counterpart of the signal generator (Section~V).
\end{enumerate}
The contribution (b) is particularly meaningful since it identifies a category of useful inputs that were previously overlooked due to its failure to meet Willems' condition.
For example, shorter multisine sequences with fewer frequencies can be used as inputs to collect informative input--output data.

In Section~II, we briefly recall the standard notions on data-driven analysis.
In Section~III, we discuss the relationship between the persistency of excitation criteria and the dimension of the signal generator, and revisit Willems' fundamental lemma using signal generator based input conditions. 
Section~IV presents our main result, namely a new fundamental lemma, and demonstrates how Willems' persistency of excitation condition can be relaxed.
In Section~V, we naturally extend our result to continuous-time systems. 
Concluding remarks are presented in Section~VI.

\textit{Notation:} 
\(\N\) and \(\Z\) denote the set of natural numbers and integers, respectively. 
\(\R\), \(\R^n\), and \(\R^{n\times m}\) denote the sets of real numbers, \(n\)-dimensional real vectors, and \(n\)-by-\(m\) real matrices, respectively.
\(I_n\) denotes the \(n\)-by-\(n\) identity matrix and \(0_{n\times m}\) denotes the \(n\times m\) zero matrix.
We write \(I\) and \(0\) when the dimensions are clear from the context. Blank entries in displayed matrices are understood to be zero.
\(\sigma(M)\) denotes the spectrum of a square matrix \(M\).
For a matrix \(M\), \(\rank(M)\), \(\im(M)\), and \(\ker(M)\) denote its rank, image, and null space, respectively.
\(\dim(V)\) denotes the dimension of a vector space \(V\) and \(\spanop(S)\) denotes the linear span of a set \(S\).
\(\emptyset\) denotes the empty set. 
For matrices \(A \in \R^{n\times n}\), \(B \in \R^{n\times 1}\), and \(C \in \R^{1\times n}\),
\begin{equation*}
    \begin{aligned}
        \Ctrbm{k}{A,B}
        &\coloneqq\! \begin{bmatrix}
        B & AB & \cdots & A^{k-1}B
        \end{bmatrix},\\
        \Obsvm{k}{C,A}
        &\coloneqq\! \Ctrbm{k}{A^\top,C^\top}^\top
    \end{aligned}
\end{equation*}
denote the \(k\)-step controllability and observability matrices, respectively.
If \(k\) is omitted, it is understood \(k=n\).
For a signal \(z:\Z\to\R\), \(t_0\in\Z\), and \(T\in\N\), \(\Signal{z}{t_0,t_0+T-1}\) denotes the finite sequence \(z(t_0),z(t_0+1),\ldots,z(t_0+T-1)\).

{\section{Preliminaries}}
We address the problem of identifying a finite-dimensional LTI system from an input--output data.
Since such data determines only the input--output map, only the controllable and observable part of the system can be recovered.
Hence, for clarity, we restrict our attention to unknown systems that are controllable and observable.
Moreover, we confine our discussion to the single-input-single-output (SISO) case for clear exposition.

We first consider the discrete-time setting and then extend the discussion to the continuous-time setting.
Consider a discrete-time LTI SISO system
\begin{equation}\label{eq:sys}
    \begin{aligned}
        x(t+1) &= Ax(t)+Bu(t)\in\R^n,\\
        y(t)   &= Cx(t)+Du(t),
    \end{aligned}
\end{equation}
which is controllable and observable.

{\subsection{Data representation of systems}
For a signal \(z:\Z\to\R\) and \(L\in\N\), let
\[
    \Lift{z}{L}{t}\coloneqq [z(t)\; z(t+1)\; \cdots\; z(t+L-1)]^\top \in \R^L.
\]
We call \(\Lift{z}{L}{t}\) the \(L\)-window of \(z\) starting at \(t\in\Z\).
Then, we define the set of input--output \(L\)-windows of the system~\eqref{eq:sys} as
\[
    \mathscr{B}_L
    \coloneqq
    \left\{\!
    \begin{bmatrix}
        \Lift{u}{L}{t_0}\\[1mm]
        \Lift{y}{L}{t_0}
    \end{bmatrix}\!\in\R^{2L}
    \;\middle|
    \begin{array}{l}
        \exists\, x:\Z\to\R^n \text{ such that}\\
        (u,x,y)\text{ satisfies } \eqref{eq:sys}\\
        \text{for } t\in\{t_0,\ldots,t_0+L-1\}
    \end{array}
    \!\!\!\right\}
\]
and call each element an input--output \(L\)-window of~\eqref{eq:sys}.
Since~\eqref{eq:sys} is an LTI system, \(\mathscr{B}_L\) is a linear subspace of \(\R^{2L}\).

We consider the problem of recovering \(\mathscr{B}_L\) from input--output data, which includes system identification as its special case.
Since \(\mathscr{B}_L\) is a linear subspace, we aim to find a spanning set of \(\mathscr{B}_L\) from data.
For \(T\geq L\), let \(\Signal{u}{0,T-1}\) and \(\Signal{y}{0,T-1}\) be an input--output data generated by the system~\eqref{eq:sys}.
Then, we can obtain \(T-L+1\) input--output \(L\)-windows 
\[
    \begin{bmatrix}
        \Lift{u}{L}{0}\\
        \Lift{y}{L}{0}
    \end{bmatrix},
    \begin{bmatrix}
        \Lift{u}{L}{1}\\
        \Lift{y}{L}{1}
    \end{bmatrix},\ldots,
    \begin{bmatrix}
        \Lift{u}{L}{T-L}\\
        \Lift{y}{L}{T-L}
    \end{bmatrix},
\]
from this input--output data. 
We further define the input--output Hankel matrix of depth \(L\) by
\[\setlength{\arraycolsep}{2.5pt}
    \begin{aligned}
        \mathscr{H}_{L,T}(u,y)
        \coloneqq&\! \begin{bmatrix}
            \Lift{u}{L}{0}& \cdots& \Lift{u}{L}{T-L}\\
            \Lift{y}{L}{0}& \cdots &\Lift{y}{L}{T-L}
        \end{bmatrix}\!\in\R^{2L\times (T-L+1)}.
    \end{aligned}
\]
Then, \(\im(\mathscr{H}_{L,T}(u,y))\) represents the linear span of 
the collected windows. It is clear that \(\im(\mathscr{H}_{L,T}(u,y))\) is a linear subspace of \(\mathscr{B}_L\).

Therefore, if we have an input--output data that satisfies
\begin{equation}\label{eq:imHankel}
    \im(\mathscr{H}_{L,T}(u,y))=\mathscr{B}_L,
\end{equation}
that is, the set of collected input--output \(L\)-windows spans the entire \(\mathscr{B}_L\), then we can recover all possible input--output \(L\)-windows.
The following is the well-known equivalent rank characterization {\cite{m1}}.

\begin{proposition}[{\cite[Proof of Theorem~6]{m1}}]\label{prop:colspace}
Consider an input-output data \((\Signal{u}{0,T-1},\Signal{y}{0,T-1})\) generated by~\eqref{eq:sys}. 
For \(L\leq T\), the following are equivalent:
\begin{enumerate}[label=(\alph*)]
    \item The condition~\eqref{eq:imHankel} is satisfied.
    \item \(\rank(\mathscr{H}_{L,T}(u,y))
    \!=\! \dim(\mathscr{B}_L)
    \!=\!L + \rank(\Obsvm{L}{C,\mkern-1muA})\).
\end{enumerate}
\end{proposition}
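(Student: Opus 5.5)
The plan is to prove the equivalence by first computing $\dim(\mathscr{B}_L)$ explicitly. The equivalence of (a) and (b) is then immediate, because $\im(\mathscr{H}_{L,T}(u,y))\subseteq\mathscr{B}_L$ always holds. Indeed, each column of the Hankel matrix is an input--output $L$-window of~\eqref{eq:sys}: take $x$ to be the actual state trajectory that generated the data, shifted to start at $t_0=j$, and use time invariance. A subspace of a finite-dimensional space equals the whole space if and only if the two dimensions coincide. So (a) holds if and only if $\rank(\mathscr{H}_{L,T}(u,y))=\dim(\mathscr{B}_L)$, and it remains to show $\dim(\mathscr{B}_L)=L+\rank(\Obsvm{L}{C,A})$.

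To compute this dimension, I would iterate~\eqref{eq:sys} over the window. This gives the standard parametrization
\[
\begin{bmatrix}\Lift{u}{L}{t_0}\\ \Lift{y}{L}{t_0}\end{bmatrix}
=\begin{bmatrix} 0 & I_L\\ \Obsvm{L}{C,A} & \mathcal{T}_L\end{bmatrix}
\begin{bmatrix} x(t_0)\\ \Lift{u}{L}{t_0}\end{bmatrix},
\]
where $\mathcal{T}_L\in\R^{L\times L}$ is the lower-triangular Toeplitz matrix of Markov parameters, with $D$ on the diagonal and $CA^{i-j-1}B$ below it. The definition of $\mathscr{B}_L$ only requires some state trajectory on the window. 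Hence $x(t_0)\in\R^n$ and $\Lift{u}{L}{t_0}\in\R^L$ range freely: any pair can be propagated forward by~\eqref{eq:sys} and extended arbitrarily outside the window. Therefore $\mathscr{B}_L$ is exactly the image of the $2L\times(n+L)$ block matrix above.

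The rank of this block matrix is computed by block column elimination. Subtracting suitable combinations of the identity block clears $\mathcal{T}_L$, which shows the rank is $L+\rank(\Obsvm{L}{C,A})$. This completes (b) and hence the equivalence.

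The only delicate point is the claim that $x(t_0)$ and the input window are independent free parameters in the definition of $\mathscr{B}_L$. I must check that the existential quantifier over $x:\Z\to\R^n$ imposes no constraint outside the window; it doesn't, since values outside the window are arbitrary. Controllability and observability are not needed here. They only let one simplify the formula to $L+\min(L,n)$, which I would remark on but not require.
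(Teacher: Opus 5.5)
Your argument is correct and is the standard one behind this proposition. The paper does not prove it but takes it from the cited reference: $\mathscr{B}_L$ is the image of the (initial state, input window) parametrization, the Hankel image lies inside it, and (a)$\Leftrightarrow$(b) is a dimension count; the same Toeplitz-matrix elimination reappears in the paper's identity~\eqref{eq:obsvmm}. One small correction: $\mathcal{T}_L$ is removed by a block \emph{row} operation, not a column operation. Left-multiplying by $\begin{bmatrix} I_L & 0\\ -\mathcal{T}_L & I_L\end{bmatrix}$ yields $\begin{bmatrix} 0 & I_L\\ \Obsvm{L}{C,A} & 0\end{bmatrix}$, whereas column operations cannot clear $\mathcal{T}_L$ in general.
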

Throughout the paper, we call an input--output data \((\Signal{u}{0,T-1},\Signal{y}{0,T-1})\) informative for \(L\) if it satisfies either of the equivalent conditions in Proposition~\ref{prop:colspace}.\footnote{
    More generally, informative input--output data can be defined by spanning sets of \(L\)-windows of \(\mathscr{B}_L\). Here, we restrict our attention to the case where such a spanning set is generated from a single consecutive input--output sequence.
} 
When the length \(L\) is clear from the context or not essential to the discussion, we simply say that the input--output data is informative.

\subsection{Brief summary of Willems' fundamental lemma}
Willems' fundamental lemma suggests a sufficient condition on inputs that yields an informative input--output data. 
This condition is called persistency of excitation in~{\cite{i1}}.
In particular, \(\Signal{u}{0,T-1}\) is persistently exciting (PE) of order \(K\) if 
\[
    \begin{aligned}
        \Hankelm{K}{\Signal{u}{0,T-1}}\coloneqq& \begin{bmatrix}
            u(0)&u(1)&\cdots & u(T-K)\\
            u(1)&u(2)&\cdots & u(T-K+1)\\
            \vdots&\vdots&\ddots&\vdots\\
            u(K-1)&u(K)&\cdots & u(T-1)
        \end{bmatrix}\\
        =& \begin{bmatrix}
            \Lift{u}{K}{0}&\Lift{u}{K}{1}&\cdots & \Lift{u}{K}{T-K}
        \end{bmatrix}
    \end{aligned}
\] 
has full row rank.
We denote the maximum PE order of \(\Signal{u}{0,T-1}\) by \(\PE(\Signal{u}{0,T-1})\).\footnote{
    If \(\Signal{u}{0,T-1}\) is PE of order \(K\), then it is PE of order \(k\) for all \(k\leq K\); if it is not PE of order \(K'\), then it is not PE of order \(k\) for all \(k\geq K'\). Thus, maximum PE order is well-defined.
}
It straightforwardly satisfies \(0 \leq \PE(\Signal{u}{0,T-1}) \leq \floor{\tfrac{T+1}{2}}\).
With this notation, Willems' fundamental lemma can be stated as follows.
\begin{proposition}[{\cite{i1,i2,i3}}]\label{prop:WFL+}
Suppose that we apply \(\Signal{u}{0,T-1}\) to the system~\eqref{eq:sys} as an input.\footnote{
    The original Willems' fundamental lemma in {\cite{i1}} only considered the case (c). For the remaining parts, (a) is from \cite{i2} and (b) is from \cite{i3}.
}
\begin{enumerate}[label=(\alph*)]
    \item If \(\PE{(\Signal{u}{0,T-1})}<L\), then the input--output data fails to be informative for \(L\), for \emph{any} \(x(0)\in\R^n\).
    \item If \(L\leq \PE{(\Signal{u}{0,T-1})}< L+n\), then the input--output data fails to be informative for \(L\), for \emph{some} \(x(0)\in\R^n\), generically.\footnote{
        Here, \emph{generically} means that, for a given input \(\Signal{u}{0,T-1}\), the statement holds for almost all controllable systems. 
        See~\cite{i3} for a precise statement and the associated conditions.
    }
    \item If \(\PE{(\Signal{u}{0,T-1})}\geq L+n\), which in particular requires \(\quad T\geq 2\PE(\Signal{u}{0,T-1})-1\), then the input--output data is informative for \(L\), for \emph{every} \(x(0)\in\R^n\).
\end{enumerate}
\end{proposition}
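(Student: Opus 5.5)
We prove the three parts separately. Each rests on the standard factorization of the input--output Hankel matrix through the initial state and the input Hankel matrix. For every window start \(t\),
\[
\begin{bmatrix}\Lift{u}{L}{t}\\ \Lift{y}{L}{t}\end{bmatrix}
=\begin{bmatrix} I_L & 0\\ \mathcal{T}_L & \Obsvm{L}{C,A}\end{bmatrix}
\begin{bmatrix}\Lift{u}{L}{t}\\ x(t)\end{bmatrix},
\]
where \(\mathcal{T}_L\) is the lower-triangular Toeplitz matrix of Markov parameters. For \(L\ge n\) the left factor has rank \(L+n=\dim(\mathscr{B}_L)\), and for \(L<n\) its image is exactly \(\mathscr{B}_L\). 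By Proposition~\ref{prop:colspace}, informativity for \(L\) is then equivalent to the stacked matrix \(\begin{bmatrix}H_L(u)\\ X\end{bmatrix}\), with \(X=[x(0)\ \cdots\ x(T-L)]\), having rank large enough to be mapped onto \(\mathscr{B}_L\). When \(L\ge n\) this means full row rank \(L+n\).

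\textbf{Part (a).} The projection of \(\mathscr{B}_L\) onto the input coordinates is all of \(\R^L\), because any input window is admissible. So if \(\im(\mathscr{H}_{L,T}(u,y))=\mathscr{B}_L\), the top block \(H_L(\Signal{u}{0,T-1})\) must have rank \(L\), i.e.\ the input is PE of order \(L\). Hence \(\PE<L\) rules out informativity for every \(x(0)\).

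\textbf{Part (c).} This is Willems' original argument. Suppose a left kernel vector \((\eta,\xi)\) annihilates \(\begin{bmatrix}H_L(u)\\ X\end{bmatrix}\). Using \(x(t+1)=Ax(t)+Bu(t)\) and the Cayley--Hamilton relation \(x(t+n)=A^nx(t)+\sum_k A^{n-1-k}Bu(t+k)\), we build \(n+1\) shifted copies of this relation. Combining them with the coefficients of the characteristic polynomial of \(A\) eliminates the state and yields a vector in the left kernel of \(H_{L+n}(u)\). Full row rank of \(H_{L+n}(u)\) forces that vector to vanish. Controllability, via the structure of the coefficients, then propagates \(\eta=0\) and \(\xi=0\). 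The counting \(T-L-n+1\ge L+n\) gives \(T\ge 2\PE-1\). This holds for every \(x(0)\).

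\textbf{Part (b).} Here the goal is a counterexample initial state for generic systems. For a given \(u\) with \(\PE(u)=K\), \(L\le K<L+n\), take a nonzero \(\gamma\) in the left kernel of \(H_{K+1}(u)\). The aim is to choose \(x(0)\) so that the state sequence satisfies a linear relation with the input that makes \(\begin{bmatrix}H_L(u)\\ X\end{bmatrix}\) rank deficient. One concrete way is to let \(x(0)\) make the trajectory \(x\) lie in a proper subspace correlated with \(u\). For instance, pick \(x(0)\) so that \(\xi^\top x(t)=\eta^\top\Lift{u}{L}{t}\) for all \(t\). Solving for such \((\xi,\eta,x(0))\) amounts to a polynomial/Sylvester-type equation whose solvability requires genericity conditions on \((A,B)\) relative to \(\gamma\).

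\textbf{Main obstacle.} I expect part (b) to be the hardest step, because it needs a genericity argument. I would first try to show that the solvability conditions fail only on a proper algebraic subset of controllable pairs, e.g.\ when the eigenvalues of \(A\) collide with roots of the polynomial \(\gamma(z)\), deferring precise conditions to \cite{i3} as the footnote indicates.
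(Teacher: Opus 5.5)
The paper gives no proof of this proposition. It quotes (a) from \cite{i2}, (b) from \cite{i3} and (c) from \cite{i1}, so your argument has to stand on its own. Your (a) is correct. It is also more direct than the rank count on $\mathscr{L}_L$ the paper uses for the generator version in Theorem~\ref{thm:newFL}(a), since it needs no generator structure. Your (c) is Willems' elimination argument and is fine, with one slip in the length remark: $T\ge 2\PE(\Signal{u}{0,T-1})-1$ comes from $H_K(u)$ with $K=\PE(\Signal{u}{0,T-1})$ needing $T-K+1\ge K$ columns, not from the count for $H_{L+n}(u)$.

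The genuine gap is (b). You state the target relation $\xi^\top x(t)=\eta^\top\Lift{u}{L}{t}$ but never show that an $x(0)$ achieving it exists. That existence, with its genericity condition, is the whole content of (b). It can be closed along your own lines:
\begin{itemize}
\item Read $\gamma$ as a polynomial $\gamma(z)$ of degree at most $K\le L+n-1$, put $a(z)=\det(zI-A)$, and divide: $\gamma=a\eta-r$ with $\deg\eta\le L-1$ and $\deg r\le n-1$.
\item By controllability there is a unique $\xi$ with $\xi^\top\mathrm{adj}(zI-A)B=r(z)$. Cayley--Hamilton gives $a(\delta)x=\mathrm{adj}(\delta I-A)Bu$ for the forward shift $\delta$.
\item Hence $e(t)\coloneqq\xi^\top x(t)-\eta^\top\Lift{u}{L}{t}$ satisfies $a(\delta)e=-\gamma(\delta)u=0$ for $0\le t\le T-L-n$. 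So $e\equiv0$ on $\{0,\ldots,T-L\}$ once $e(0)=\cdots=e(n-1)=0$.
\item That is an affine equation $\Obsvm{n}{\xi^\top,A}\,x(0)=(\text{input terms})$. It is solvable when $(\xi^\top,A)$ is observable, i.e.\ when $\gcd(r,a)=\gcd(\gamma,a)=1$. This means no root of $\gamma$ is an eigenvalue of $A$, which holds for almost all $A$ once $u$ is fixed.
\item Then $\xi\neq0$, and $(\eta,-\xi)$ is a nonzero left-kernel vector of $\bigl[\begin{smallmatrix}H_L(u)\\ X\end{smallmatrix}\bigr]$.
\end{itemize}
For generator inputs the paper's machinery gives this at once. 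Taking $x(0)=\Pi w_0$ yields $X=\Pi\,[w(0)\ \cdots\ w(T-L)]$, so the stacked matrix has rank at most $N_g<L+n$; this is the nonemptiness of $E_1$ in~\eqref{eq:exceptinclusion}.

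Second, rank deficiency of $\bigl[\begin{smallmatrix}H_L(u)\\ X\end{smallmatrix}\bigr]$ implies non-informativity only for $L\ge n$, where your left factor is injective. For $L<n$ that factor has a kernel and the implication breaks. This cannot be patched, because (b) read literally fails for $L<n$. Take $n=2$, $L=1$, $u(t)=\cos(\pi t/2)$ (so $S_g$ is the rotation by $\pi/2$) and $T\ge4$, which gives $\PE(\Signal{u}{0,T-1})=2$.
\begin{itemize}
\item Informativity for $L=1$ only asks that $y\notin\spanop\{u\}$.
\item If $\sigma(A)\cap\{\pm i\}=\emptyset$, then $y-cu=(M_g-cL_g)S_g^tw_0+CA^t\bar x_0$.
\item Whenever $M_g\neq cL_g$, the pair $(M_g-cL_g,S_g)$ is observable, since $S_g$ has no real eigenvector. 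Hence $\bigl([\,M_g-cL_g\ \ C\,],\,\mathrm{diag}(S_g,A)\bigr)$ is observable, and $y-cu\equiv0$ on four samples would force $w_0=0$.
\item So a bad $x(0)$ exists only if $M_g=cL_g$, i.e.\ $C(iI-A)^{-1}B+D=c\in\R$. This fails for almost all systems.
\end{itemize}
Your proof of (b) must therefore assume $L\ge n$, or import the precise hypotheses of \cite{i3}.
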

Building upon this summary, we provide further interpretations of these classical results from the perspective of a signal generator in the next section.

{\section{Inputs generated by signal generators}
The following autonomous LTI single-output system
\begin{equation}\label{eq:sgenDT}
    \begin{aligned}
        w(t+1)&=S_gw(t)\in\R^{N_g},\quad w(0)=w_0\\
        u(t)&=L_gw(t)\in\R,
    \end{aligned}
\end{equation}
provides a natural model for an input signal generator.
We use the response of the signal generator~\eqref{eq:sgenDT} initialized at \(w_0\)
as an input to the system~\eqref{eq:sys}.

We want to consider a minimal signal generator that generates the given input sequence.
So, we assume the following.
\begin{assumption}\label{ass:sglg}
The pair \((L_g,S_g)\) is observable.
\end{assumption}
\begin{assumption}\label{ass:sgw0}
The pair \((S_g,w_0)\) is controllable.
\end{assumption}

Since a signal generator is a design object, these assumptions are well justified.
In fact, given any signal generator that does not satisfy these assumptions, 
the Kalman decomposition gives a lower-order signal generator satisfying both assumptions while generating the same signal.

\subsection{Relationship between the maximum PE order and the dimension of the signal generator}
The following lemma characterizes the relationship between the state dimension \(N_g\) of the signal generator and the maximum PE order of a given signal.

\begin{lemma}\label{lem:sgenPE} The following are equivalent.
\begin{enumerate}[label=(\alph*)]
    \item \(\Signal{u}{0,T-1}\) is a response of the signal generator~\eqref{eq:sgenDT} satisfying Assumptions~\ref{ass:sglg}--\ref{ass:sgw0}, with \(N_g=K\) and \(T\geq 2K-1\).
    \item \(\Signal{u}{0,T-1}\) is a signal that satisfies \(\PE(\Signal{u}{0,T-1})=K\) and 
    \(\rank(\Hankelm{K}{\Signal{u}{0,T-2}}) = \rank(\Hankelm{K+1}{\Signal{u}{0,T-1}})\).
\end{enumerate}
\end{lemma}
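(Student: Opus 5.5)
The plan is to write the Hankel matrices of a generator response as products of observability and controllability matrices, and then run both directions as rank and dimension counts. If \(u(t)=L_gS_g^tw_0\), then \(u(t+i+j)=L_gS_g^{i}S_g^{j+t}w_0\), so for any \(k\) and any window start \(s\) we get
\[
\begin{bmatrix}\Lift{u}{k}{s}&\cdots&\Lift{u}{k}{s+m-1}\end{bmatrix}=\Obsvm{k}{L_g,S_g}\,S_g^{s}\,\Ctrbm{m}{S_g,w_0}.
\]
In particular \(\Hankelm{k}{\Signal{u}{0,T-1}}=\Obsvm{k}{L_g,S_g}\Ctrbm{T-k+1}{S_g,w_0}\).

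For (a)\(\Rightarrow\)(b), take \(N_g=K\) and \(T\ge 2K-1\). Then \(T-K+1\ge K\), so Assumptions~\ref{ass:sglg}--\ref{ass:sgw0} make both factors of \(\Hankelm{K}{\Signal{u}{0,T-1}}\) have rank \(K\), and this \(K\times(T-K+1)\) matrix has full row rank. For \(k=K+1\), the rank is at most \(N_g=K<K+1\), so the data is not PE of order \(K+1\) and \(\PE=K\). For the rank equality, \(\Hankelm{K}{\Signal{u}{0,T-2}}=\Obsvm{K}{}\Ctrbm{T-K}{}\) with \(T-K\ge K-1\). When \(T-K\ge K\) its rank is \(K\). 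At the boundary \(T=2K-1\), \(\Ctrbm{K-1}{}\) only has rank \(K-1\), so this case needs separate care, and I expect it to be the delicate point of the whole proof. Meanwhile \(\Hankelm{K+1}{\Signal{u}{0,T-1}}=\Obsvm{K+1}{}\Ctrbm{T-K}{}\) has the same column-side factor and an injective left factor. Hence both ranks equal \(\rank\Ctrbm{T-K}{S_g,w_0}\), and the equality holds in every case, including the boundary one.

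For (b)\(\Rightarrow\)(a), let \(H=\Hankelm{K+1}{\Signal{u}{0,T-1}}\), with \(T-K\) columns. Its top \(K\) rows form \(\Hankelm{K}{\Signal{u}{0,T-2}}\), and by hypothesis they have the same rank as \(H\). So the last row is a linear combination of the first \(K\): there are \(a_0,\dots,a_{K-1}\) with \(u(t+K)=\sum_i a_iu(t+i)\) for \(t=0,\dots,T-K-1\). This recursion runs over the whole data. Take \(S_g\) to be the companion matrix of the \(a_i\), \(L_g=e_1^\top\) and \(w_0=\Lift{u}{K}{0}\). Then \(L_gS_g^tw_0=u(t)\) for \(t\le T-1\), and \((L_g,S_g)\) is observable because its observability matrix is the identity. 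Controllability of \((S_g,w_0)\) follows because \(\Ctrbm{T-K+1}{S_g,w_0}=\Hankelm{K}{\Signal{u}{0,T-1}}\) has rank \(K\) by PE order \(K\), which forces \(\rank\Ctrbm{K}{}=K\). Full row rank of the \(K\times(T-K+1)\) matrix also gives \(T-K+1\ge K\), i.e.\ \(T\ge 2K-1\).

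The main thing to check is that the rank hypothesis in (b) gives the recursion over exactly the right index range. It is also worth noting that the \(\PE=K\) part of (b), though automatic for \(N_g=K\), is what secures minimality. The remaining steps are routine.
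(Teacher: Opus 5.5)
Your proof is correct and follows essentially the paper's argument: for (a)$\Rightarrow$(b) you use the same factorization of the Hankel matrices into $\Obsvm{k}{L_g,S_g}$ times a controllability matrix, and for (b)$\Rightarrow$(a) you extract the same recursion from the rank equality and realize it with a generator of dimension $K$. Your companion matrix with $L_g=e_1^\top$ is a concrete instance of the paper's arbitrary non-derogatory $S_g$; it makes $\Obsvm{K}{L_g,S_g}=I$, so controllability can be read off directly from $\Hankelm{K}{\Signal{u}{0,T-1}}$, and you also state explicitly the deduction $T\ge 2K-1$ from full row rank, which the paper leaves implicit.
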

\begin{proof}

((a) \(\Rightarrow\) (b)) It was established in {\cite[Lemma~1]{i2}}. 
For completeness, we include a self-contained proof.

We have \(u(t) = L_gS_g^{t}w_0\) for all \(t\ge 0\). Thus,
\begin{equation*}
    \begin{aligned}
        \Hankelm{K}{\Signal{u}{0,T-2}} &= \Obsvm{K}{L_g, S_g}\Ctrbm{T-K}{S_g,w_0},\\
        \Hankelm{K}{\Signal{u}{0,T-1}} &= \Obsvm{K}{L_g, S_g}\Ctrbm{T-K+1}{S_g,w_0},\\
        \Hankelm{K+1}{\Signal{u}{0,T-1}} &= \Obsvm{K+1}{L_g, S_g}\Ctrbm{T-K}{S_g,w_0}.
    \end{aligned}
\end{equation*}
Under Assumptions~\ref{ass:sglg}--\ref{ass:sgw0}, we have
\begin{equation*}
    \begin{aligned}
        &\rank(\Obsvm{K}{L_g,S_g})=\rank(\Obsvm{K+1}{L_g,S_g}) = K,\\
        &\rank(\Ctrbm{T-K}{S_g,w_0}) \leq\rank(\Ctrbm{T-K+1}{S_g,w_0}) = K.
    \end{aligned}
\end{equation*}
Hence, 
\(\rank(\Hankelm{K}{\Signal{u}{0,T-2}}) = \rank(\Hankelm{K+1}{\Signal{u}{0,T-1}})<K+1\) and \(\rank(\Hankelm{K}{\Signal{u}{0,T-1}}) = K\).

((b) \(\Rightarrow\) (a)) The rank condition 
\(\rank(\Hankelm{K}{\Signal{u}{0,T-2}}) = \rank(\Hankelm{K+1}{\Signal{u}{0,T-1}})\)
implies the existence of \(\xi = [\xi_0\; \cdots\; \xi_{K-1}]\!^\top\!\in\!\R^K\) such that
\[[u(K)\; u(K+1)\; \cdots\; u(T-1)] = -\xi^\top \Hankelm{K}{\Signal{u}{0,T-2}}.\]
Let \(S_g\) be any non-derogatory matrix\footnote{
    A non-derogatory matrix is a matrix whose minimal polynomial coincides with its characteristic polynomial. 
    There exists a row vector \(C\) such that the pair \((C,A)\) is observable if and only if \(A\) is non-derogatory.} 
with \(\det(zI-S_g)=z^{K}+\xi_{K-1}z^{K-1}+\cdots+\xi_0\),
\(L_g\) be such that \((L_g,S_g)\) is observable,
and \(w_0={\Obsvm{K}{L_g,S_g} }^{-1}\Signal{u}{0,K-1}\).
Let the response of the signal generator~\eqref{eq:sgenDT} with \(L_g,S_g,\) and \(w_0\) be \(\bar u(i)=L_gS_g^iw_0\) for \(i=0,\ldots,T-1\).

We show that \(\Signal{u}{0,T-1}=\Signal{\bar u}{0,T-1}\), inductively.
It is trivial that \(\Signal{u}{0,K-1}=\Signal{\bar u}{0,K-1}\).
Assume that \(\Signal{u}{0,i-1}=\Signal{\bar u}{0,i-1}\) for some \(K\leq i\leq T-1\).
Then, by Cayley-Hamilton theorem,
    \begin{align*}
        \bar u(i)&=L_gS_g^iw_0=L_g(-\xi_{K-1}S_g^{i-1}-\cdots-\xi_{0}S_g^{i-K})w_0\\
        &=-\xi_{K-1}\bar u(i-1)-\cdots-\xi_{0}\bar u(i-K)\\
        &=-\xi^\top \Lift{u}{K}{i-K}=u(i).
    \end{align*}

Finally, \((S_g, w_0)\) is controllable since
\(\Ctrbm{T-K+1}{S_g,w_0}= \Obsvm{K}{L_g,S_g}^{-1}\Hankelm{K}{\Signal{u}{0,T-1}}\) has full row rank.
\end{proof}

\subsection{Sinusoidal and random signals can almost always be generated by a signal generator}

We now examine the class of inputs that, among those satisfying \(\PE(\Signal{u}{0,T-1})=K\), 
can be generated by the signal generator~\eqref{eq:sgenDT} satisfying Assumptions~\ref{ass:sglg}--\ref{ass:sgw0}, with state dimension \(N_g=K\).
Equivalently, in view of Lemma~\ref{lem:sgenPE}, we investigate how restrictive the additional condition 
\(\rank\left(\Hankelm{K}{\Signal{u}{0,T-2}}\right) = \rank\left(\Hankelm{K+1}{\Signal{u}{0,T-1}}\right)\)
in Lemma~\ref{lem:sgenPE} (b) is.
We show that this class is sufficiently broad to encompass a substantial range of conventional persistently exciting inputs.

A signal with maximum PE order \(K\) was typically generated by two methods.
The first is to generate a multisine input having \(K/2\) distinct frequencies if \(K\) is even, and \((K-1)/2\) distinct frequencies together with a constant bias if \(K\) is odd.
They can always be generated by the signal generator~\eqref{eq:sgenDT} of dimension \(K\).

The second is to generate a random signal.
The following lemma guarantees that almost all randomly generated input sequences of length \(2K-1\) or \(2K\) 
not only have maximum PE order \(K\) but can also be generated by the signal generator~\eqref{eq:sgenDT} of dimension \(K\).

\begin{lemma}\label{lem:sgenPErandom}
Suppose that the terms of \(\Signal{u}{0,T-1}\) are drawn independently from the uniform distribution on \([a,b]\subset \R\), where \(a<b\).
Then the following hold with probability \(1\).
\begin{enumerate}[label=(\alph*)]
    \item \(\PE(\Signal{u}{0,T-1})=\floor{\tfrac{T+1}{2}}\).
    \item \(\Signal{u}{0,T-1}\) is a response of the signal generator~\eqref{eq:sgenDT}, satisfying Assumptions~\ref{ass:sglg}--\ref{ass:sgw0}, with \(N_g=\floor{\tfrac{T+1}{2}}\).
\end{enumerate}
\end{lemma}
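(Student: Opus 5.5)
The plan is to reduce both parts to the statement that certain polynomials in the entries of \(\Signal{u}{0,T-1}\) are not identically zero. The zero set of a nonzero polynomial on \(\R^T\) has Lebesgue measure zero. The joint law of the entries, uniform on \([a,b]^T\), is absolutely continuous with respect to Lebesgue measure. Hence a nonzero polynomial is nonzero with probability \(1\), and a finite intersection of such events also has probability \(1\).

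Set \(K=\floor{\tfrac{T+1}{2}}\), and split into two cases: \(T=2K-1\) (odd) and \(T=2K\) (even).

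For part (a), I would show that \(\Hankelm{K}{\Signal{u}{0,T-1}}\) has full row rank \(K\) almost surely. It suffices to exhibit a \(K\times K\) minor that is a nonzero polynomial. Take the first \(K\) columns, i.e.\ the Hankel matrix \([u(i+j)]_{i,j=0}^{K-1}\), whose entries use \(u(0),\dots,u(2K-2)\); this index range is available since \(T\ge 2K-1\). Its determinant contains the monomial \(u(K-1)^K\), coming from the anti-diagonal, with coefficient \(\pm1\). No other permutation produces this monomial, since \(u(K-1)\) appears only on the anti-diagonal. So the determinant is a nonzero polynomial. Then \(\PE\ge K\) almost surely, and since \(\PE\le K\) always, equality holds.

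For part (b), I would invoke Lemma~\ref{lem:sgenPE}, (b)\(\Rightarrow\)(a). Its condition \(T\ge 2K-1\) holds by the choice of \(K\). It remains to check that \(\rank(\Hankelm{K}{\Signal{u}{0,T-2}})=\rank(\Hankelm{K+1}{\Signal{u}{0,T-1}})\) almost surely.
\begin{itemize}
\item If \(T=2K-1\), then \(\Hankelm{K}{\Signal{u}{0,T-2}}\) is \(K\times(K-1)\) and \(\Hankelm{K+1}{\Signal{u}{0,T-1}}\) is \((K+1)\times(K-1)\). The latter contains the former as its top \(K\) rows, so its rank is at least that of the former. Both ranks are at most \(K-1\). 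Hence equality holds as soon as \(\Hankelm{K}{\Signal{u}{0,T-2}}\) has full column rank \(K-1\). This follows because its top \((K-1)\times(K-1)\) block is the Hankel matrix of \(u(0),\dots,u(2K-4)\), whose determinant is nonzero almost surely by the anti-diagonal argument above.
\item If \(T=2K\), then \(\Hankelm{K}{\Signal{u}{0,T-2}}\) is \(K\times K\) and \(\Hankelm{K+1}{\Signal{u}{0,T-1}}\) is \((K+1)\times K\). Again the first is the top block of the second, and both ranks are at most \(K\). So it suffices that \(\det[u(i+j)]_{i,j=0}^{K-1}\neq 0\), which is exactly the polynomial from part (a).
\end{itemize}
The degenerate case \(K=1\), \(T=1\) should be handled directly: the matrices have zero columns, so both ranks are \(0\), and \(\PE=1\) iff \(u(0)\neq0\), which holds almost surely. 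Similarly \(K=1\), \(T=2\) reduces to \(u(0)\neq 0\).

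The main obstacle is the only nontrivial point: the Hankel determinant is not identically zero. I would settle it by the unique-monomial argument, tracking the largest-index variable \(u(2K-2)\) or the anti-diagonal variable. Everything else is bookkeeping of matrix sizes and the reduction to Lemma~\ref{lem:sgenPE}.
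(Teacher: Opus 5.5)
Your proof is correct and follows essentially the same route as the paper. Both reduce everything to the almost-sure nonsingularity of the $K\times K$ Hankel matrix of $u(0),\dots,u(2K-2)$, treat odd and even $T$ separately, and then invoke Lemma~\ref{lem:sgenPE} (b)$\Rightarrow$(a). Your unique anti-diagonal monomial $u(K-1)^K$ is the same observation the paper makes in Lemma~\ref{lem:randomPE} by evaluating at $[0\;\cdots\;0\;1\;0\;\cdots\;0]^\top$.

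The only difference is cosmetic. In the odd case you use a separate $(K-1)\times(K-1)$ Hankel minor. The paper instead obtains full column rank of $\Hankelm{K}{\Signal{u}{0,T-2}}$ directly, as a column subset of the nonsingular $\Hankelm{K}{\Signal{u}{0,T-1}}$.
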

\begin{proof}
Let \(K = \floor{\tfrac{T+1}{2}}\).
For even \(T\), \(\rank(\Hankelm{K}{\Signal{u}{0,T-2}})=K\) almost surely by Lemma~\ref{lem:randomPEunif} in Appendix.
It follows that \(\rank(\Hankelm{K}{\Signal{u}{0,T-1}})=\rank(\Hankelm{K+1}{\Signal{u}{0,T-1}})=K\). 
For odd \(T\), \(\rank(\Hankelm{K}{\Signal{u}{0,T-1}})=K\) almost surely by Lemma~\ref{lem:randomPEunif} in Appendix. 
It follows that \(\rank(\Hankelm{K}{\Signal{u}{0,T-2}}) = \rank(\Hankelm{K+1}{\Signal{u}{0,T-1}})=K-1\).
So, (a) and (b) both hold for almost all \(\Signal{u}{0,T-1}\) by Lemma~\ref{lem:sgenPE}.
\end{proof}

This subsection shows that the signal generator formulation encompasses an extensive range of persistently exciting inputs encountered in practice. 
It therefore retains much of the practical scope of the conventional persistency of excitation framework while providing structural information beyond that conveyed by the maximum PE order. 
This richer viewpoint enables a finer analysis of input--output data.

{\subsection{Input signal generator perspective on Willems' fundamental lemma}}
So, now suppose that \(\Signal{u}{0,T-1}\) is generated by the signal generator~\eqref{eq:sgenDT} satisfying Assumptions~\ref{ass:sglg}--\ref{ass:sgw0}.
By Lemma~\ref{lem:sgenPE}, PE condition in Proposition~\ref{prop:WFL+} can be restated in terms of the dimension \(N_g\) of the signal generator as follows.

\begin{corollary}
Suppose that we apply \(\Signal{u}{0,T-1}\), which is generated by the signal generator~\eqref{eq:sgenDT} satisfying Assumptions~\ref{ass:sglg}--\ref{ass:sgw0}, to the system~\eqref{eq:sys} as an input.
Depending on the signal generator dimension \(N_g\), the following hold for given \(L\in\N\).
    \begin{enumerate}[label=(\alph*)]
    \item If \(N_g<L\), then, for any \(T\), the input--output data fails to be informative for \(L\), for \emph{any} \(x(0)\in\R^n\).
    \item If \(L\leq N_g< L+n\), then, for any \(T\), the input--output data fails to be informative for \(L\), for \emph{some} \(x(0)\in\R^n\), generically.\footnote{
        Here, \emph{generically} is understood in the same sense as in Proposition~\ref{prop:WFL+}.
    }
    \item If \(N_g\geq L+n\) and \(T\geq 2N_g-1\), then the input--output data is informative for \(L\), for \emph{every} \(x(0)\in\R^n\).
    \end{enumerate}
\end{corollary}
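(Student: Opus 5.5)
The plan is to translate each case of Proposition~\ref{prop:WFL+} into a statement about \(N_g\) by bounding \(\PE(\Signal{u}{0,T-1})\) above by \(N_g\) for every \(T\), and showing it equals \(N_g\) once \(T\geq 2N_g-1\). The key identity, taken from the proof of Lemma~\ref{lem:sgenPE}, is the factorization
\[
\Hankelm{k}{\Signal{u}{0,T-1}}=\Obsvm{k}{L_g,S_g}\,\Ctrbm{T-k+1}{S_g,w_0},
\]
valid for every \(k\le T\) because \(u(t)=L_gS_g^tw_0\).

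\textbf{Upper bound.} From the factorization, \(\rank(\Hankelm{k}{\Signal{u}{0,T-1}})\le N_g\) for every \(k\). If \(k>N_g\), this matrix has \(k>N_g\) rows, so it cannot have full row rank. Hence \(\PE(\Signal{u}{0,T-1})\le N_g\) for every \(T\), including short records where \(T<2N_g-1\); in that regime the bound \(\PE\le\floor{\tfrac{T+1}{2}}\) only makes the PE order smaller.

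\textbf{Case (a).} If \(N_g<L\), then \(\PE(\Signal{u}{0,T-1})\le N_g<L\) for every \(T\). Proposition~\ref{prop:WFL+}(a) then gives non-informativeness for every \(x(0)\).

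\textbf{Case (b).} If \(L\le N_g<L+n\), then \(\PE(\Signal{u}{0,T-1})<L+n\). There are two subcases.
\begin{itemize}
\item If \(\PE\ge L\), Proposition~\ref{prop:WFL+}(b) applies directly and gives failure for some \(x(0)\), generically.
\item If \(\PE<L\), Proposition~\ref{prop:WFL+}(a) gives failure for every \(x(0)\), which is stronger and in particular covers some \(x(0)\).
\end{itemize}
The genericity qualifier is inherited from Proposition~\ref{prop:WFL+}(b) for the fixed input.

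\textbf{Case (c).} Suppose \(N_g\ge L+n\) and \(T\ge 2N_g-1\). Then \(T-N_g+1\ge N_g\), so \(\Ctrbm{T-N_g+1}{S_g,w_0}\) contains the first \(N_g\) columns \(\Ctrbm{N_g}{S_g,w_0}\). This block is invertible by Assumption~\ref{ass:sgw0}, so the factor has rank \(N_g\). Likewise \(\Obsvm{N_g}{L_g,S_g}\) is invertible by Assumption~\ref{ass:sglg}. Therefore \(\Hankelm{N_g}{\Signal{u}{0,T-1}}\) has rank \(N_g\), and \(\PE(\Signal{u}{0,T-1})=N_g\ge L+n\). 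Proposition~\ref{prop:WFL+}(c) then gives informativeness for every \(x(0)\).

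\textbf{Main obstacle.} The only delicate point is case (b): \(\PE\) could drop below \(L\) when \(T\) is short. The subcase split above handles this, since failure for all \(x(0)\) implies failure for some \(x(0)\). Beyond that, the work is careful bookkeeping of column counts, so that the factorization and the rank claims hold for every \(T\) rather than only for \(T\ge 2N_g-1\).
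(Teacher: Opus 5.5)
Your proof is correct and follows essentially the paper's route. The paper obtains this corollary by invoking Lemma~\ref{lem:sgenPE}, whose proof rests on the same factorization $\Hankelm{k}{\Signal{u}{0,T-1}}=\Obsvm{k}{L_g,S_g}\Ctrbm{T-k+1}{S_g,w_0}$, to translate the PE conditions of Proposition~\ref{prop:WFL+} into conditions on $N_g$. Your explicit bound $\PE(\Signal{u}{0,T-1})\le N_g$ for every $T$, together with the subcase split in (b), makes precise the ``for any $T$'' in parts (a)--(b), which Lemma~\ref{lem:sgenPE} as stated (requiring $T\ge 2N_g-1$) does not cover on its own.
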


In case~(b), it only establishes the existence of an initial condition \(x(0)\) for which the input--output data fails to be informative. 
Neither a size nor a structure of the exceptional set of initial conditions has been characterized.
It turns out, by further analysis in the next section, that it fails only for the measure zero portion of initial conditions.
In particular, the following statement is proved for almost all systems~\eqref{eq:sys}.
\begin{list}{}{\leftmargin=3mm \rightmargin=3mm}
    \item \itshape
    If \(L\leq N_g< L+n\) and \(T\geq N_g+n+L-1\), then the input--output data is informative for \(L\), for \emph{almost all} \(x(0)\in\R^n\).
\end{list}

Therefore, it is not necessary to design a signal generator with dimension \(N_g\geq L+n\).
Rather, a lower-order signal generator with \(N_g\geq L\) can sufficiently produce informative input--output data for almost all initial conditions.
In other words, the condition \(N_g\geq L\) is necessary and, in an almost-everywhere sense, sufficient for generating an informative input--output data.

{\section{Practical relaxation of Willems' fundamental lemma}}
When we interconnect the signal generator~\eqref{eq:sgenDT} to the system~\eqref{eq:sys}, we can understand the system behavior from a different perspective. This allows us to relax the input condition in the fundamental lemma. The subsequent analysis relies on the assumption below.

\begin{assumption}\label{ass:sgsigma}
\(\sigma(A)\cap\sigma(S_g)=\emptyset\).
\end{assumption}
This assumption holds generically: for any given \(A\), it is satisfied for almost all choices of \(S_g\), and, similarly, for any given \(S_g\), it is satisfied for almost all choices of \(A\).

{\subsection{Decomposition of the input--output Hankel matrix}}
According to the Assumption~\ref{ass:sgsigma}, the Sylvester equation
\begin{equation}\label{eq:syl}
    A\Pi + BL_g = \Pi S_g
\end{equation}
admits a unique solution \(\Pi\in\R^{n\times N_g}\)~\cite{a1}. 
With this \(\Pi\), we obtain
\begin{equation*}
    \begin{aligned}    
        x(t+1)-\Pi w(t+1) &= Ax(t)+(BL_g-\Pi S_g)w(t)\\
        &= A(x(t)-\Pi w(t)),
    \end{aligned}    
\end{equation*}
when we interconnect the signal generator~\eqref{eq:sgenDT} to the system~\eqref{eq:sys}. 
Thus
\begin{equation*}\label{eq:manifold}
    \mathscr{M} \coloneqq \left\{(x,w)\in\R^n\times \R^{N_g}\ : x=\Pi w\right\}
\end{equation*}
is an invariant manifold. From now on, we use \(\bar x\) to represent
\(\bar x(t) = x(t)-\Pi w(t)\) and write \(\bar x_0\) for the initial value \(\bar x(0)\).
Then, for signals \(w(t) = S_g^tw_0\) and \(\bar x(t) = A^t\bar x_0\), signals \(u(t)\) and \(y(t)\) are written as
\begin{equation*}
    \begin{aligned}
        u(t) &= L_gw(t),\\
        y(t) &= Cx(t) + Du(t) = (C\Pi+DL_g)w(t) + C\bar x(t)\\
        &\eqqcolon M_gw(t)+C\bar x(t).
    \end{aligned}    
\end{equation*}

\begin{remark}
    An invariant manifold \(\mathscr{M}\) has been discussed in moment matching based model reduction theory {\cite{i4}}. 
    \(M_g\) is called the forward moment of the system~\eqref{eq:sys} at \((L_g,S_g)\), in time-domain notion. It actually encodes the values of the transfer function of the system~\eqref{eq:sys} on \(\sigma(S_g)\),\footnote{
        The values of \(f\) on \(\sigma(A)\) mean \(f(\lambda_i),f^{(1)}(\lambda_i),\)\(\ldots,f^{(\eta_i-1)}(\lambda_i)\) for each \(\lambda_i\in\sigma(A)\), where \(f^{(k)}\) denotes the \(k\)-th derivative of \(f\)  and \(\eta_i\) is the maximum Jordan chain length associated with \(\lambda_i\).
    } which are called moments in frequency-domain notion.
\end{remark}

Then, we can write
\begin{equation*}\label{eq:signalio}
\begin{aligned}
    \begin{bmatrix}
        \Lift{u}{L}{t}\\
        \Lift{y}{L}{t}
    \end{bmatrix}
    \mkern-5mu&=\mkern-5mu \setlength{\arraycolsep}{2pt}\begin{bNiceArray}{c I c}[margin=2pt]
            \mkern-1mu\Obsvm{L}{L_g,S_g} & 0\\\H
            \mkern-1mu\Obsvm{L}{M_g,S_g} & \Obsvm{L}{C,A}\mkern-1mu
        \end{bNiceArray}\mkern-8mu\begin{bmatrix}
        w(t)\\\bar x(t)
    \end{bmatrix}
    \mkern-5mu\eqqcolon\mkern-2mu 
    \mathscr{L}_L\mkern-4mu\begin{bmatrix}
        w(t)\\\bar x(t)
    \end{bmatrix}\!,
\end{aligned}
\end{equation*}
and thus, can decompose the input--output Hankel matrix as
\begin{equation}\label{eq:decompio}
\begin{aligned}
    \mathscr{H}_{L,T}(u,y) 
    &= \mathscr{L}_L \setlength{\arraycolsep}{3pt}\begin{bNiceArray}{c c c}[margin=2pt]
        w(0)&\cdots&w(T-L)\\\H
        \bar x(0)&\cdots&\bar x(T-L)
    \end{bNiceArray}\\
    &= \mathscr{L}_L \begin{bNiceArray}{c}[margin=2pt]
        \Ctrbm{T-L+1}{S_g,w_0}\\\H
        \Ctrbm{T-L+1}{A,\bar x_0}
    \end{bNiceArray}\\
    &\eqqcolon \mathscr{L}_L\mathscr{R}_{T-L+1}.
\end{aligned}
\end{equation}

\noindent Now, \(\rank(\mathscr{H}_{L,T}(u,y))\) can be interpreted in terms of the structural properties of \(\mathscr{L}_L\) and \(\mathscr{R}_{T-L+1}\).

{\subsection{New fundamental lemma}}
Now we introduce our main result. The following theorem provides conditions on the signal generator---specifically, on its dimension \(N_g\)---that guarantee informative input--output data.

\begin{theorem}\label{thm:newFL}
Suppose that we interconnect the signal generator~\eqref{eq:sgenDT}, satisfying Assumptions~\ref{ass:sglg}--\ref{ass:sgsigma}, to the system~\eqref{eq:sys} and collect length-\(T\) input--output data.
Depending on the signal generator dimension \(N_g\), the following hold for given \(L\in\N\).
\begin{enumerate}[label=(\alph*)]
    \item If \(N_g<L\), then, for any \(T\), the input--output data fails to be informative for \(L\), for any \(x(0)\in\R^n\).
    \item If \(L\leq N_g < L+n\) and \(T\geq N_g + n + L - 1\), then the input--output data is informative for \(L\), for all \(x(0)\in \R^n\setminus E\), where \(E\) is a set of Lebesgue measure zero given in~\eqref{eq:except2}.
    \item If \(N_g\ge L+n\) and \(T\geq N_g + n + L - 1\), then the input--output data is informative for \(L\), for all \(x(0)\in\R^n\).
\end{enumerate}
\end{theorem}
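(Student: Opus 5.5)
The plan is to work entirely with the factorization~\eqref{eq:decompio}, \(\mathscr{H}_{L,T}(u,y)=\mathscr{L}_L\mathscr{R}_{T-L+1}\), and with the rank characterization of Proposition~\ref{prop:colspace}. Every column of \(\mathscr{L}_L\) is an input--output \(L\)-window of~\eqref{eq:sys}: the \(w\)-columns come from trajectories on \(\mathscr{M}\), and the \(\bar x\)-columns come from zero-input trajectories. Hence \(\im(\mathscr{H}_{L,T})\subseteq\im(\mathscr{L}_L)\subseteq\mathscr{B}_L\) always holds. Informativity is therefore equivalent to the left kernel of \(\mathscr{H}_{L,T}\) being no larger than the annihilator of \(\mathscr{B}_L\).

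\textbf{Part (a).} I would look only at the first \(L\) rows. These equal \(\Obsvm{L}{L_g,S_g}\Ctrbm{T-L+1}{S_g,w_0}\), which has rank at most \(N_g<L\). On the other hand, the projection of \(\mathscr{B}_L\) onto the input coordinates is all of \(\R^L\), since inputs are free. So \(\im(\mathscr{H}_{L,T})\neq\mathscr{B}_L\) for every \(x(0)\).

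\textbf{Step 1: rank of \(\mathscr{R}\).} For (b) and (c), note that \(\mathscr{R}_{T-L+1}\) is the \((T-L+1)\)-step controllability matrix of the pair \(\left(\mathrm{diag}(S_g,A),\,(w_0,\bar x_0)\right)\). Since \(T-L+1\geq N_g+n\), its image is the full cyclic subspace generated by \((w_0,\bar x_0)\). By Assumption~\ref{ass:sgsigma}, the spectra are disjoint, so the minimal polynomial of \((w_0,\bar x_0)\) is the product of the two local minimal polynomials. Combined with Assumption~\ref{ass:sgw0}, this cyclic subspace is \(\R^{N_g}\oplus V\), where \(V=\im\Ctrbm{}{A,\bar x_0}\). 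Hence \(\im(\mathscr{H}_{L,T})=\mathscr{L}_L(\R^{N_g}\oplus V)\).

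\textbf{Step 2: left-kernel computation in polynomial form.} A row vector \([\alpha^\top\ \beta^\top]\in\R^{1\times 2L}\) corresponds to polynomials \(a(z),b(z)\) of degree \(<L\). Write the transfer function as \(G=n/d\) with \(\deg d=n\) and \(n,d\) coprime (by minimality). Since \(\Obsvm{L}{L_g,S_g}w\) and \(\Obsvm{L}{M_g,S_g}w\) are samples of \(L_gS_g^tw\) and \(M_gS_g^tw\), and \(M_g\) encodes \(G\) on \(\sigma(S_g)\), the vector annihilates \(\mathscr{L}_L(\R^{N_g}\oplus 0)\) exactly when \(a+bG\) vanishes, with multiplicity, on \(\sigma(S_g)\). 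Here I use the observability of \((L_g,S_g)\) to pass from all \(w\) to all eigen/Jordan directions. Equivalently, the polynomial \(p:=ad+bn\), of degree \(<L+n\), vanishes on the \(N_g\) roots of \(\det(zI-S_g)\) counted with multiplicity; note these are not roots of \(d\). Conversely, \([\alpha^\top\ \beta^\top]\) annihilates \(\mathscr{B}_L\) exactly when \(p\equiv 0\) modulo \(d\)-type zero-input terms, that is, \(p=0\) together with \(\beta^\top\Obsvm{L}{C,A}=0\).

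\textbf{Part (c).} If \(N_g\geq L+n\), then \(p\) has more zeros than its degree, so \(p\equiv0\). Then \(a=-bn/d\), and coprimality forces \(d\mid b\). This in turn gives \(\beta^\top\Obsvm{L}{C,A}=0\), i.e.\ the vector annihilates \(\mathscr{B}_L\). This holds with \(V\) arbitrary, even \(V=0\), so every \(x(0)\) works.

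\textbf{Part (b).} Here \(p\) need not vanish, but if \(V=\R^n\) we additionally have the constraint \(\beta^\top\Obsvm{L}{C,A}=0\). In polynomial terms this says the strictly proper part of \(bn/d\) vanishes, i.e.\ \(d\mid bn\), hence \(d\mid b\) by coprimality. Then \(d\mid p\), so \(p\) vanishes on the \(n\) roots of \(d\) and on the \(N_g\geq L\) roots of \(\det(zI-S_g)\). These root sets are disjoint, so \(p\) has at least \(L+n>\deg p\) zeros counted with multiplicity, and \(p\equiv0\). Now \(V=\R^n\) holds iff \(\bar x_0\) is a cyclic vector for \(A\); the matrix \(A\) is non-derogatory because \((A,B)\) is controllable in the SISO case. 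So the exceptional set is
\[
E=\left\{x(0)\in\R^n : \det\Ctrbm{}{A,\,x(0)-\Pi w_0}=0\right\},
\]
the zero set of a nonzero polynomial in \(x(0)\); it is nonzero because a cyclic vector exists. Hence \(E\) has Lebesgue measure zero.

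I expect the main obstacle to be Step 2: making the polynomial/moment correspondence rigorous when \(S_g\) has nontrivial Jordan blocks. The multiplicity bookkeeping must use the derivative-moment interpretation of \(M_g\) from the Remark, and must exploit that \(\Obsvm{L}{L_g,S_g}\) and \(\Obsvm{L}{M_g,S_g}\) share the same \(S_g\)-modes. The first thing I would try is to diagonalize (over \(\mathbb{C}\)) the case of distinct eigenvalues, and then handle Jordan blocks by a confluent-Vandermonde argument.
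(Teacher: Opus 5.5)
Your argument is correct. After a shared first step, it takes a genuinely different route from the paper.

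\textbf{Where you agree and differ.} Your Step~1 is the paper's PBH argument phrased via cyclic subspaces. Your part~(a) is more elementary than the paper's computation of \(\rank(\mathscr{L}_L)\) via~\eqref{eq:obsvmm}: the input rows of \(\mathscr{H}_{L,T}(u,y)\) have rank at most \(N_g<L\), while \(\mathscr{B}_L\) projects onto all of \(\R^L\). For (b) and (c), the paper stays with matrices. It eliminates the Toeplitz term using~\eqref{eq:obsvmm} and right-multiplies the first block column by \(\Obsvm{N_g}{L_g,S_g}^{-1}\). This gives the exact formula \(\rank(\mathscr{H}_{L,T}(u,y))=L+\rank(\Obsvm{L}{C,A}[\Gamma_2\ \ \Ctrbm{}{A,\bar x_0}])\). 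For (c) it then needs Lemma~\ref{lem:sylsol}, \(\rank\Gamma_2=\rank\Ctrbm{N_g-L}{A,B}\), proved by a companion-form computation. You replace all of this with a left-kernel argument and a root count. Your \(p=ad+bn\) has degree at most \(L+n-1\), yet must be divisible by \(\det(zI-S_g)\), and also by \(d\) when \(\bar x_0\) is cyclic. This makes the thresholds \(N_g\ge L\) and \(N_g\ge L+n\) transparent and avoids computing \(\Pi\) or \(\Gamma\).

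\textbf{What the paper's route buys: sharpness.} The paper's \(E\) in~\eqref{eq:except2} is exactly the set of non-informative initial conditions. Your set is the coarser bound \(E_2\) of~\eqref{eq:exceptinclusion}. It can strictly contain \(E\) when \(N_g>L\), because the columns of \(\Gamma_2\) can compensate for a non-cyclic \(\bar x_0\). So you prove (b) with some measure-zero exceptional set, but not with the specific set~\eqref{eq:except2}, and calling your set ``the exceptional set'' overstates what you showed. Your machinery sharpens easily. For \(T\ge N_g+n+L-1\), the left kernel of \(\mathscr{H}_{L,T}(u,y)\) consists of the pairs \((a,b)\) with \(\det(zI-S_g)\mid ad+bn\) and \(b(A)\bar x_0=0\). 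Hence \(x(0)\) is non-informative iff some such pair has \(ad+bn\neq0\).

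\textbf{Step~2 is easier than you fear.} You do not need Jordan-block bookkeeping or a confluent-Vandermonde argument. Note that \([\alpha^\top\ \beta^\top]\) times the \(w\)-block of \(\mathscr{L}_L\) equals \(L_ga(S_g)+M_gb(S_g)\). The Sylvester equation gives \(M_g\,d(S_g)=L_g\,n(S_g)\), and \(d(S_g)\) is invertible by Assumption~\ref{ass:sgsigma}. Right-multiplying by \(d(S_g)\) therefore shows \(L_ga(S_g)+M_gb(S_g)=0\) iff \(L_g\,p(S_g)=0\). By observability of \((L_g,S_g)\), this holds iff \(p(S_g)=0\). Since \(S_g\) is non-derogatory, that holds iff \(\det(zI-S_g)\mid p\).
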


The practical advantage of Theorem~\ref{thm:newFL} over Willems' fundamental lemma is most transparent 
when we focus on the case \(L=n+1\), assuming that the system order \(n\) is known. The resulting statement is given in the following corollary.

\begin{corollary}\label{cor:newFL}
Consider the signal generator~\eqref{eq:sgenDT} with dimension \(N_g=n+1\), satisfying Assumptions~\ref{ass:sglg}--\ref{ass:sgw0}.
If the signal generator response \(\Signal{u}{0,T-1}\),  which has length \(T=3n+1\), is applied to the system~\eqref{eq:sys} as an input, then
\begin{equation}\label{eq:rankcond}
    \rank(\mathscr{H}_{n+1,T}(u,y))=2n+1
\end{equation}
for almost all \(A\in\R^{n\times n}\) and \(x(0)\in\R^n\).
\end{corollary}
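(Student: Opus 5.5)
We derive Corollary~\ref{cor:newFL} from Theorem~\ref{thm:newFL}(b) with \(L=n+1\) and \(N_g=n+1\). Then \(L\le N_g<L+n\) holds whenever \(n\ge 1\). The length requirement \(T\ge N_g+n+L-1=(n+1)+n+(n+1)-1=3n+1\) is met with equality by \(T=3n+1\).

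Three further checks are needed. The first is that Theorem~\ref{thm:newFL} needs Assumption~\ref{ass:sgsigma} in addition to Assumptions~\ref{ass:sglg}--\ref{ass:sgw0}. For the fixed signal generator, \(\sigma(S_g)\) is a finite set of at most \(n+1\) points. The set of \(A\in\R^{n\times n}\) sharing an eigenvalue with \(S_g\) is the zero set of the polynomial \(A\mapsto\prod_{\mu\in\sigma(S_g)}\det(\mu I-A)\), or, to stay over the reals, of the resultant of the characteristic polynomials of \(A\) and \(S_g\). This polynomial is not identically zero: a diagonal \(A\) with entries avoiding \(\sigma(S_g)\) makes it nonzero. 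Hence Assumption~\ref{ass:sgsigma} fails only on a Lebesgue-null set of \(A\). Controllability and observability of \((A,B,C)\) are standing hypotheses. If one wants to read ``almost all \(A\)'' with \(B,C\) fixed, these conditions are also complements of proper algebraic sets in \(A\), provided \(B\neq 0\) and \(C\neq 0\); a companion-form example shows the sets are proper.

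The second check is the rank count. Informativity for \(L=n+1\) means, by Proposition~\ref{prop:colspace}, that \(\rank(\mathscr{H}_{n+1,T}(u,y))=L+\rank(\Obsvm{n+1}{C,A})\). Observability gives \(\rank(\Obsvm{n+1}{C,A})=n\), so this equals \(2n+1\), which is exactly \eqref{eq:rankcond}.

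The third check, which I expect to be the main obstacle, is the measure statement in the joint variable \((A,x(0))\). For each good \(A\), Theorem~\ref{thm:newFL}(b) gives a null exceptional set \(E=E(A)\subset\R^n\). To conclude that the bad set is null in \((A,x(0))\), I would use that it is measurable and apply Fubini/Tonelli: a null set of \(A\), plus null \(x(0)\)-sections for all remaining \(A\), gives joint measure zero. Measurability is cleanest if the failure set \(\{\rank(\mathscr{H}_{n+1,T}(u,y))<2n+1\}\) is realized as a zero set of a polynomial in \((A,x(0))\). The entries of \(\mathscr{H}_{n+1,T}\) are polynomial in \(x(0)\) and \(A\), since \(y(t)=CA^tx(0)+\sum_k CA^{t-1-k}Bu(k)+Du(t)\) with \(u\) fixed. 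The failure set is therefore the common zero set of all \((2n+1)\)-minors, which is algebraic. Theorem~\ref{thm:newFL}(b) shows that at least one minor is not identically zero, so the failure set is a proper algebraic set and hence has measure zero. This argument actually makes the Fubini step unnecessary, and I would present it this way, using \eqref{eq:except2} only to certify nonemptiness of the good set.
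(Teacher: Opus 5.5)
Your proof is correct and follows the paper's route: the corollary is Theorem~\ref{thm:newFL}(b) specialized to \(L=N_g=n+1\) and \(T=3n+1=N_g+n+L-1\), with ``almost all \(A\)'' absorbing Assumption~\ref{ass:sgsigma}, and observability turning informativity into \(\rank(\mathscr{H}_{n+1,T}(u,y))=2n+1\). Your polynomial-minor argument uses Theorem~\ref{thm:newFL}(b) at only one good point \((A,x(0))\) to get a Lebesgue-null failure set jointly in \((A,x(0))\); this is a valid, rigorous supplement to a step the paper leaves implicit, since the paper only supports the iterated reading ``for a.e.\ \(A\), for all \(x(0)\notin E\)''.
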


Since the condition~\eqref{eq:rankcond} implies that 
\(\ker\!\left(\mathscr{H}_{n+1,T}(u,y)^\top\right)\) directly provides coefficients of an input--output difference equation of the system~\eqref{eq:sys},
an input--output data satisfying such condition is important in data-driven control literature; for instance, see~{\cite{ia2}}.
For the condition~\eqref{eq:rankcond}, it has been common to use an input signal satisfying \(\PE(\Signal{u}{0,T-1})\geq2n+1\) with length \(T \geq 4n+1\) as required by Willems' fundamental lemma. 
However, Corollary~\ref{cor:newFL} shows that the same rank condition can be satisfied using a signal of minimum length \(T=3n+1\).
This shows us clearly how Theorem~\ref{thm:newFL} (or Corollary~\ref{cor:newFL}) practically improves upon Willems' fundamental lemma.

We now prove Theorem~\ref{thm:newFL}.
\begin{proof}[Proof of Theorem~\ref{thm:newFL}]
We begin by establishing an identity for \(\Obsvm{L}{M_g,S_g}\).
Applying Sylvester equation~\eqref{eq:syl} to each row gives
\begin{equation}\label{eq:obsvmm}\setlength{\arraycolsep}{2pt}
    \begin{aligned}
        &\Obsvm{L}{M_g,S_g} \\
        &=\begin{bmatrix}
            D\\
            CB&D\\
            \vdots&\ddots&\ddots\\
            CA^{L-2}B&\cdots&CB&D
        \end{bmatrix}\Obsvm{L}{L_g,S_g}
        +\Obsvm{L}{C,A}\Pi.
    \end{aligned}
\end{equation}

If \(N_g<L\), then \eqref{eq:obsvmm} and the observability of \((L_g,S_g)\) gives
\[
    \begin{aligned}
        \rank(\mathscr{L}_L)
        &= \rank\left(\begin{bNiceArray}{c I c}[margin=2pt]
            \Obsvm{L}{L_g,S_g}&0\\\H
            0&\Obsvm{L}{C,A}
        \end{bNiceArray}\right)\\
        &= N_g + \rank(\Obsvm{L}{C,A})
    \end{aligned}
\]
and \(\rank(\mathscr{H}_{L,T}(u,y)) \leq \rank(\mathscr{L}_L)<L+\rank(\Obsvm{L}{C,A})\), regardless of \(\mathscr{R}_{T-L+1}\).
Hence, (a) holds.

Otherwise, if \(N_g\geq L\), then determining the rank of \(\mathscr{H}_{L,T}(u,y)\) requires a more detailed structural analysis.
Consider a controllability decomposition of the pair \((A,\bar x_0)\), with \(C\) transformed accordingly, given by
\begin{equation*}
\label{eq:ctrbdecomp}
\begin{aligned}
    T^{-1}AT
    &=\begin{bmatrix}
        A_c & A_{12}\\
        0 & A_{\bar c}
    \end{bmatrix},
    &
    T^{-1}\bar x_0
    &=\begin{bmatrix}
        \bar x_{0,c}\\
        0
    \end{bmatrix},\\
    CT
    &=\begin{bmatrix}
        C_c & C_{\bar c}
    \end{bmatrix},
\end{aligned}
\end{equation*}
where \(T=\begin{bmatrix}T_c&T_{\bar c}\end{bmatrix}\) is invertible and
the columns of \(T_c\) form a basis for \(\im\left(\Ctrbm{}{A,\bar x_0}\right)\).
Using this, the decomposition of \(\mathscr{H}_{L,T}(u,y)\) in~\eqref{eq:decompio} takes the following form.
\begin{equation}\label{eq:decompctrb}
    \begin{aligned}
        &\mathscr{H}_{L,T}(u,y) = \mathscr{L}_L\mathscr{R}_{T-L+1}\\
        &=\setlength{\arraycolsep}{2pt}\begin{bNiceArray}{c I c}[margin=2pt]
            \mkern-1mu\Obsvm{L}{L_g,S_g} & 0\\\H
            \mkern-1mu\Obsvm{L}{M_g,S_g} & \Obsvm{L}{C_c,A_c}\mkern-1mu
        \end{bNiceArray}\mkern-5mu\begin{bNiceArray}{c}[margin=2pt]
            \Ctrbm{T-L+1}{S_g,w_0}\\\H
            \Ctrbm{T-L+1}{A_c,\bar x_{0,c}}
        \end{bNiceArray}.
    \end{aligned}
\end{equation}

According to Assumptions~\ref{ass:sgw0}--\ref{ass:sgsigma} and the controllability of \((A_c,\bar x_{0,c})\), 
it directly follows from the Popov-Belevitch-Hautus (PBH) test {\cite[Theorem~1]{a3}} that 
\[
\begin{bmatrix}
    \Ctrbm{T-L+1}{S_g, w_0}\\
    \Ctrbm{T-L+1}{A_c,\bar x_{0,c}}
\end{bmatrix} 
= \Ctrbm{T-L+1}{\begin{bmatrix}
    S_g&0\\0&A_c
\end{bmatrix}, \begin{bmatrix}
    w_0\\\bar x_{0,c}
\end{bmatrix}}
\]
has full row rank for \(T\geq N_g+n+L-1\).
It follows from this fact and an identity~\eqref{eq:obsvmm} that
\begin{equation*}
\begin{aligned}
    \rank(\mathscr{H}_{L,T}(u,y)) 
    = \rank\!\left(\!\setlength{\arraycolsep}{3pt}\begin{bNiceArray}{c I c}[margin=2pt]
            \mkern-1mu\Obsvm{L}{L_g,S_g}&0\\\H
            \mkern-1mu\Obsvm{L}{C,A}\Pi&\Obsvm{L}{C,A}T_c\mkern-1mu
        \end{bNiceArray}\!\right)\!.
\end{aligned}
\end{equation*}

The observability of \((L_g,S_g)\) enables the notion
\[
    \begin{bmatrix}
        \Gamma_1&\Gamma_2
    \end{bmatrix}\coloneqq \Pi\Obsvm{N_g}{L_g,S_g}^{-1}
\] where \(\Gamma_1\in\R^{n\times L}\) and \(\Gamma_2\in\R^{n\times(N_g-L)}\),
with \(\Gamma_2\) being understood as the null matrix when \(N_g=L\).
Then,
\[
\begin{aligned}
    &\setlength{\arraycolsep}{3pt}\begin{bNiceArray}{c I c}[margin=2pt]
            \Obsvm{L}{L_g,S_g}&0\\\H
            \Obsvm{L}{C,A}\Pi&\Obsvm{L}{C,A}T_c
        \end{bNiceArray}
        \setlength{\arraycolsep}{3pt}\begin{bNiceArray}{c I c}[margin=2pt]
            \Obsvm{N_g}{L_g,S_g}&0\\\H
            0&I_n
        \end{bNiceArray}^{-1}\\
    &=\setlength{\arraycolsep}{3pt}\begin{bNiceArray}{c I c I c}[margin=2pt]
            I_L&0&0\\\H
            \Obsvm{L}{C,A}\Gamma_1&\Obsvm{L}{C,A}\Gamma_2&\Obsvm{L}{C,A}T_c
        \end{bNiceArray}.
\end{aligned}
\]

Now, we obtain 
\[
\begin{aligned}
    &\rank(\mathscr{H}_{L,T}(u,y)) \\
    &= \rank\!\left(\!\setlength{\arraycolsep}{3pt}\begin{bNiceArray}{c I c I c}[margin=2pt]
            I_L&0&0\\\H
            \Obsvm{L}{C,A}\Gamma_1&\Obsvm{L}{C,A}\Gamma_2&\Obsvm{L}{C,A}T_c
        \end{bNiceArray}\!\right)\\
    &= \rank\!\left(\!\setlength{\arraycolsep}{3pt}\begin{bNiceArray}{c I c}[margin=2pt]
            I_L&0\\\H
            0&\Obsvm{L}{C,A}\begin{bmatrix}\Gamma_2& T_c\end{bmatrix}
        \end{bNiceArray}\!\right)\\
    &= L+ \rank(\Obsvm{L}{C,A}\!\begin{bmatrix}
        \Gamma_2& \Ctrbm{}{A,\bar x_0}
    \end{bmatrix}).
\end{aligned}
\]

Thus, \(\rank(\mathscr{H}_{L,T}(u,y))= L+ \rank(\Obsvm{L}{C,A})\) for all \(x(0)\in\R^n\setminus E\), where
\begin{equation}\label{eq:except2}
    E \coloneqq \left\{\bar x_0+\Pi w_0 \;\middle|
    \begin{array}{l}
        \rank\left(\Obsvm{L}{C,A}\begin{bmatrix}
        \Gamma_2& \Ctrbm{}{A,\bar x_0}
    \end{bmatrix}\right)\\
    <\rank(\Obsvm{L}{C,A})
    \end{array}
    \!\!\!\right\},
\end{equation}
supposing \(N_g\geq L\) and \(T\geq N_g+n+L-1\).

For the set \(E\), we have the following inclusions:
\begin{equation}\label{eq:exceptinclusion}
\begin{aligned}
    E
    &\subseteq \left\{\bar x_0+\Pi w_0 : \rank(\begin{bmatrix}
        \Gamma_2& \Ctrbm{}{A,\bar x_0}
    \end{bmatrix})<n\right\}\eqqcolon E_1\\
    &\subseteq \left\{\bar x_0+\Pi w_0 : \rank(\Ctrbm{}{A,\bar x_0})<n\right\}\eqqcolon E_2.
\end{aligned}
\end{equation}
By Lemma~\ref{lem:measure0} in the Appendix and the translation invariance of Lebesgue measure, \(E_2\) has Lebesgue measure zero, and so does \(E\).
Hence, (b) holds.
Moreover, for \(N_g\geq L+n\), 
\(\rank(\Gamma_2)=n\) by Lemma~\ref{lem:sylsol} in the Appendix, and hence, \(E=E_1=\emptyset\).
Therefore, (c) also holds.
\end{proof}

\begin{remark}
    To gain additional insight into the proof of Theorem~\ref{thm:newFL} (b), we can take a simple look at the matrix
    \[
    \mathscr{L}_{L,c}\coloneqq \begin{bNiceArray}{c I c}[margin=2pt]
            \Obsvm{L}{L_g,S_g} & 0\\\H
            \Obsvm{L}{M_g,S_g} & \Obsvm{L}{C_c,A_c}
    \end{bNiceArray}
    \]
    in~\eqref{eq:decompctrb} of the proof,
    which satisfies \(\rank(\mathscr{H}_{L,T}(u,y)) = \rank(\mathscr{L}_{L,c})\) for \(T\geq N_g+n_c+L-1\).

    It is clear from its size that \(\rank(\mathscr{L}_{L,c}) \le N_g + n_c\).
    Also, since \(\mathscr{L}_{L,c}\) is a block triangular matrix,
    \(\rank(\mathscr{L}_{L,c}) \ge \rank(\Obsvm{L}{L_g,S_g}) + \rank(\Obsvm{L}{C_c,A_c}) = \min\{N_g,L\} + \min\{n_c,L\}\).
    Thus, combining with the fact that \(\rank(\mathscr{H}_{L,T}(u,y)) \leq L+\rank(\Obsvm{L}{C,A})\), we have
    \[
    \begin{aligned}
        \min\{N_g,L\} + \min\{n_c, L\} \le \rank(\mathscr{H}_{L,T}(u,y)) \\
        \quad\le \min\{N_g+n_c, L+n, L+L\}
    \end{aligned}
    \]
    for \(T\geq N_g+n_c+L-1\).    
    So, if \(N_g\geq L\) and \(n=n_c\), then \(\rank(\mathscr{H}_{L,T}(u,y)) = L+\min\{n,L\} = L+ \rank(\Obsvm{L}{C,A})\).
    Here, \(n=n_c\) happens whenever \(\bar x_0 \notin E_2 \supset E\), 
    where \(E_2\) is a nonempty set of Lebesgue measure zero defined in~\eqref{eq:exceptinclusion}.
    
\end{remark}

Finally, to better understand Theorem~\ref{thm:newFL}, we can divide the input conditions into those on the signal generator dimension \(N_g\) and those on the data length \(T\).
We focus particularly on the case \(L\geq n+1\), 
in which the complete input--output behavior of system~\eqref{eq:sys} can be recovered from \(\mathscr{H}_{L,T}(u,y)\).

We first summarize the conditions on \(N_g\).
For \(L\geq n+1\), an exceptional set \(E\) in Theorem~\ref{thm:newFL} (b) equals a nonempty set \(E_1\) in~\eqref{eq:exceptinclusion}.
Thus, \(N_g\geq L+n\) is necessary and sufficient to obtain an informative input--output data for \emph{all} \(x(0)\in\R^n\).
However, to obtain such data for \emph{almost all} \(x(0)\in\R^n\), \(N_g\geq L\) is necessary and even sufficient.

On the other hand, with respect to \(T\), \(T\geq N_g+n+L-1\) is only a sufficient condition.
For \(L\geq n+1\), \(L+\rank(\Obsvm{L}{C,A}) = L+n\).
A necessary condition on \(T\) for \(\mathscr{H}_{L,T}(u,y)\in\R^{2L\times (T-L+1)}\) to have rank \(L+n\) is \(T \ge 2L+n-1\).
So there is room for a shorter input. 
Nevertheless, since \(N_g+n+L-1=2L+n-1\) for \(N_g=L\), the condition \(T\geq N_g+n+L-1\) appears to be fairly tight from the perspective of designing inputs.

{\section{Continuous-time fundamental lemma}}
Now, we establish the continuous-time counterpart of our signal generator based fundamental lemma.
Consider a continuous-time LTI SISO system
\begin{equation}\label{eq:sysCT}
    \begin{aligned}
        \dot x(t) &= Ax(t)+Bu(t)\in\R^n,\\
        y(t)      &= Cx(t)+Du(t),
    \end{aligned}
\end{equation}
which is controllable and observable.

{\subsection{Data representation of continuous-time systems}}
For a signal \(z:\R\to\R\) and \(L\in\N\), let
\[
    \LiftCT{z}{L}{t}\coloneqq [z(t)\; z^{(1)}(t)\; \cdots\; z^{(L-1)}(t)]^\top\in\R^{L},
\]
where \(z^{(\ell)}(t)\) is the \(\ell\)-th derivative of \(z\) with respect to \(t\) (whenever it exists).
We refer to \(\LiftCT{z}{L}{t}\) the \((L-1)\)-jet of \(z\) at time \(t\in\R\).\footnote{
    The terminology `jet' follows \cite{m2}.
} 
Then, we define the set of input--output \((L-1)\)-jets of the system~\eqref{eq:sysCT} as
\[
    \mathscr{B}_L^{\,c}
    \coloneqq
    \left\{\!
    \begin{bmatrix}
        \LiftCT{u}{L}{t_0}\\[1mm]
        \LiftCT{y}{L}{t_0}
    \end{bmatrix}\!\in\R^{2L}
    \;\middle|
    \begin{array}{l}
        \exists\, x:\R\to\R^n \text{ such that}\\ 
        (u,x,y)\text{ satisfies } \eqref{eq:sysCT}\\
        \text{in a neighborhood of }t_0
    \end{array}
    \!\!\!\right\}.
\]
Here, we restrict our attention to smooth input signals so that the jets are well-defined.
Since~\eqref{eq:sysCT} is an LTI system, \(\mathscr{B}_L^{\,c}\) is a linear subspace of \(\R^{2L}\).

We consider the continuous-time system recovery, that is, recovering \(\mathscr{B}_L^{\,c}\) from input--output data. 
As in the discrete-time case, it suffices to find a spanning set of \(\mathscr{B}_L^{\,c}\) consisting of collected input--output \((L-1)\)-jets.

We regard the input--output data over a time interval \([0,T]\) as the observed trajectories \(u:[0,T]\to\R\) and \(y:[0,T]\to\R\).
For sampling instants \(t_1,\ldots,t_{k}\in(0,T)\), we can define the continuous-time Hankel-type matrix constructed from the corresponding input--output \((L-1)\)-jets as
\[
\mathscr{H}_{L,k}^{\,c}(u,y)
\coloneqq
\begin{bmatrix}
\LiftCT{u}{L}{t_1} & \cdots & \LiftCT{u}{L}{t_k} \\
\LiftCT{y}{L}{t_1} & \cdots & \LiftCT{y}{L}{t_k}
\end{bmatrix}
\in\R^{2L\times k}.
\]
By construction, \(\im(\mathscr{H}_{L,k}^{\,c}(u,y))\) is a linear subspace of \(\mathscr{B}_L^{\,c}\). 
The remaining question is which input conditions ensure that
\begin{equation}\label{eq:imHankelCT}
    \im(\mathscr{H}_{L,k}^{\,c}(u,y))=\mathscr{B}_L^{\,c}.
\end{equation}
An equivalent rank characterization for condition~\eqref{eq:imHankelCT} is induced similarly to the discrete-time case in {\cite{m1}} as follows.

\begin{proposition}[\cite{m1}]\label{prop:colspaceCT}
Consider the set of sampled input--output \((L-1)\)-jets
\[
    \left\{
    \begin{bmatrix}
        \LiftCT{u}{L}{t}\\
        \LiftCT{y}{L}{t}
    \end{bmatrix}:\ t\in\{t_1,t_2,\ldots,t_k\}
    \right\}
\]
generated by~\eqref{eq:sysCT}. 
The following are equivalent:
\begin{enumerate}[label=(\alph*)]
    \item The condition~\eqref{eq:imHankelCT} is satisfied.
    \item \(\rank(\mathscr{H}_{L,k}^{\,c}(u,y))
    \mkern-1mu=\mkern-1mu 
    \dim(\mathscr{B}_L^{\,c})
    \mkern-1mu=\mkern-1mu
    L + \rank(\Obsvm{L}{C,\mkern-1muA})\).
\end{enumerate}
\end{proposition}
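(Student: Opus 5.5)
The plan is to mirror the discrete-time argument behind Proposition~\ref{prop:colspace}, replacing the shift operator by differentiation. First I will describe \(\mathscr{B}_L^{\,c}\) explicitly. If \((u,x,y)\) satisfies~\eqref{eq:sysCT} near \(t_0\) with \(u\) smooth, then repeated differentiation gives \(y^{(\ell)}=CA^\ell x+\sum_{j=0}^{\ell-1}CA^{\ell-1-j}Bu^{(j)}+Du^{(\ell)}\). Stacking these for \(\ell=0,\ldots,L-1\) yields
\[
\LiftCT{y}{L}{t_0}=\Obsvm{L}{C,A}\,x(t_0)+\mathscr{T}_L\,\LiftCT{u}{L}{t_0},
\]
where \(\mathscr{T}_L\) is the lower-triangular Toeplitz matrix with first column \(D, CB, \ldots, CA^{L-2}B\). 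This is the same matrix that appears in~\eqref{eq:obsvmm}. Conversely, for every \(x_0\in\R^n\) and every \(v\in\R^L\), the polynomial input \(u(t)=\sum_{j<L} v_j (t-t_0)^j/j!\) together with the solution starting at \(x(t_0)=x_0\) realizes the jet \(\LiftCT{u}{L}{t_0}=v\). Hence
\[
\mathscr{B}_L^{\,c}=\im\begin{bmatrix} I_L & 0\\ \mathscr{T}_L & \Obsvm{L}{C,A}\end{bmatrix},
\]
and the block-triangular structure with an identity block gives \(\dim(\mathscr{B}_L^{\,c})=L+\rank(\Obsvm{L}{C,A})\). This establishes the second equality in (b).

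The equivalence then follows from linear algebra. Each column of \(\mathscr{H}_{L,k}^{\,c}(u,y)\) is the jet at \(t_i\) of a trajectory of~\eqref{eq:sysCT} defined on a neighborhood of \(t_i\subset(0,T)\), so each column lies in \(\mathscr{B}_L^{\,c}\). Therefore \(\im(\mathscr{H}_{L,k}^{\,c}(u,y))\subseteq\mathscr{B}_L^{\,c}\). A subspace of a finite-dimensional space equals the whole space if and only if the two dimensions agree. Since \(\dim\im(\mathscr{H}_{L,k}^{\,c}(u,y))=\rank(\mathscr{H}_{L,k}^{\,c}(u,y))\), conditions (a) and (b) are equivalent.

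The only delicate point is the description of \(\mathscr{B}_L^{\,c}\). It requires two things. First, the jets must be well defined: the standing smoothness restriction on inputs guarantees this, and \(x\) is then smooth by the variation-of-constants formula. Second, the surjectivity argument requires that \(x(t_0)\) can be chosen arbitrarily together with an arbitrary input jet, which holds because the initial state and the polynomial input are independent design choices. Observability of \((C,A)\) is not needed for this proposition. It only simplifies the dimension count to \(L+\min\{n,L\}\).
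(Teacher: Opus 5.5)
Your proof is correct and takes the same approach the paper relies on. The paper gives no separate argument; it defers to the discrete-time proof in \cite{m1}, which consists of the inclusion $\im(\mathscr{H}_{L,k}^{\,c}(u,y))\subseteq\mathscr{B}_L^{\,c}$ together with the dimension count $\dim(\mathscr{B}_L^{\,c})=L+\rank(\Obsvm{L}{C,A})$ from the block-triangular image representation, and you reproduce exactly this with differentiation in place of the shift.
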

We say that an input--output data \((u:[0,T]\to\R, y:[0,T]\to\R)\) is informative for \(L\) if there exist \({t_1,\ldots,t_k}\in(0,T)\) such that input--output \((L-1)\)-jets at \({t_1,\ldots,t_k}\) satisfy either of the equivalent conditions in Proposition~\ref{prop:colspaceCT}.

{\subsection{New fundamental lemma for continuous-time systems}}
We introduce an autonomous LTI single-output system
\begin{equation}\label{eq:sgenCT}
    \begin{aligned}
        \dot w(t)&=S_gw(t)\in\R^{N_g},\quad w(0)=w_0\\
        u(t)&=L_gw(t)\in\R,
    \end{aligned}
\end{equation}
as a signal generator. 
Under the continuous-time counterparts of Assumptions~\ref{ass:sglg}--\ref{ass:sgsigma}, the analysis proceeds similarly to the discrete-time case.

We interconnect the signal generator~\eqref{eq:sgenCT} to the system~\eqref{eq:sysCT} 
and let \(\bar x(t) = x(t)-\Pi w(t), M_g=C\Pi+DL_g\) for the solution \(\Pi\) of the Sylvester equation \(A\Pi+BL_g=\Pi S_g\).
We write \(\bar x_0\) for the initial value \(\bar x(0)\).
Since \(\dot x(t)-\Pi \dot w(t) = A(x(t)-\Pi w(t))\), we have
\begin{equation*}
    \begin{aligned}
        u^{(\ell)}(t) &= L_gS_g^\ell w(t),\\
        y^{(\ell)}(t) &= M_gS_g^\ell w(t) + CA^\ell \bar x(t)
    \end{aligned}    
\end{equation*}
for \(w(t) = e^{S_g t}w_0\) and \(\bar x(t) = e^{At}\bar x_0\).
Thus, we recover
\begin{equation*}\label{eq:signalioCT}
\begin{aligned}
    \begin{bmatrix}
        \LiftCT{u}{L}{t}\\
        \LiftCT{y}{L}{t}
    \end{bmatrix}
    \mkern-5mu&=\mkern-5mu \setlength{\arraycolsep}{2pt}\begin{bNiceArray}{c I c}[margin=2pt]
            \mkern-2mu\Obsvm{L}{L_g,S_g}&0\\\H
            \mkern-2mu\Obsvm{L}{M_g,S_g}&\Obsvm{L}{C,A}\mkern-2mu
        \end{bNiceArray}\mkern-8mu\begin{bmatrix}
        w(t)\\\bar x(t)
    \end{bmatrix}
    \mkern-5mu\eqqcolon\!
    \mathscr{L}_L\mkern-4mu\begin{bmatrix}
        w(t)\\\bar x(t)
    \end{bmatrix}\!.
\end{aligned}
\end{equation*}

For suitably chosen \(k\geq N_g+n\) sampling instants, \(t_1,\ldots,t_k\in(0,T)\),
\[
\begin{aligned}
\im\!\left(\setlength{\arraycolsep}{3pt}\begin{bNiceArray}{c c c}[margin=2pt]
    w(t_1)&\cdots&w(t_k)\\\H
    \bar x(t_1)&\cdots&\bar x(t_k)
\end{bNiceArray}\right)
&= \spanop\left\{\begin{bmatrix}
    w(t)\\\bar x(t)
\end{bmatrix}:\ t\in(0,T)\right\}\\
&= \im\!\left(
    \Ctrbm{}{\begin{bmatrix}
        S_g&0\\0&A
    \end{bmatrix},\begin{bmatrix}
        w_0\\\bar x_0
    \end{bmatrix}}
    \right).
\end{aligned}
\]
Thus, for such suitably chosen sampling instants,
\[
\begin{aligned}
    \im\left(\mathscr{H}_{L,k}^{\,c}(u,y)\right)
    &= \im\!\left(\mathscr{L}_L
    \Ctrbm{}{\begin{bmatrix}
        S_g&0\\0&A
    \end{bmatrix},\begin{bmatrix}
        w_0\\\bar x_0
    \end{bmatrix}}
    \right)\\
    &\eqqcolon \im\left(\mathscr{L}_L\mathscr{R}^{\,c}\right),
\end{aligned}
\]
and hence, \(\rank(\mathscr{H}_{L,k}^{\,c}(u,y)) = \rank\left(\mathscr{L}_L\mathscr{R}^{\,c}\right)\).

At this point, we state continuous-time counterpart of our signal generator based fundamental lemma. The proof is analogous to Theorem~\ref{thm:newFL}.

\begin{theorem}\label{thm:newFLCT}
Consider the system~\eqref{eq:sysCT} and the signal generator~\eqref{eq:sgenCT} such that \((L_g,S_g)\) is observable, \((S_g,w_0)\) is controllable, and \(\sigma(A)\cap\sigma(S_g)=\emptyset\).
Suppose that we interconnect the signal generator~\eqref{eq:sgenCT} to the system~\eqref{eq:sysCT} and observe \((u,y)\) on \([0,T]\) for \(T> 0\).
Depending on the signal generator dimension \(N_g\), the following hold for given \(L\in\N\).
\begin{enumerate}[label=(\alph*)]
    \item If \(N_g<L\), then the input--output data fails to be informative for \(L\), for any \(x(0)\in\R^n\).
    \item If \(L\leq N_g < L+n\), then the input--output data is informative for \(L\), for \emph{almost all} \(x(0)\in\R^n\), except for a set of Lebesgue measure zero\footnote{This exceptional set is the continuous-time analogue of \(E\) in \eqref{eq:except2}.}.
    \item If \(N_g\geq L+n\), then the input--output data is informative for \(L\), for all \(x(0)\in\R^n\).
\end{enumerate}
\end{theorem}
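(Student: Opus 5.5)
Since the image of $\mathscr{H}_{L,k}^{\,c}(u,y)$ over arbitrary finite sets of sampling instants in $(0,T)$ can be made to equal $\im(\mathscr{L}_L\mathscr{R}^{\,c})$, and never exceeds it, informativity is equivalent to $\rank(\mathscr{L}_L\mathscr{R}^{\,c}) = L+\rank(\Obsvm{L}{C,A})$. The plan is therefore to reproduce the rank computation from the proof of Theorem~\ref{thm:newFL}, with $\Ctrbm{T-L+1}{\cdot}$ replaced by the full controllability matrix.

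\textbf{Step 1: sampling.} Justify the displayed claim that suitably chosen $k\ge N_g+n$ instants realize the span. The span of $\{e^{\mathrm{diag}(S_g,A)t}[w_0;\bar x_0]: t\in(0,T)\}$ equals the image of the full controllability matrix. To see this, note that a vector orthogonal to all these points gives an analytic function vanishing on $(0,T)$. All its derivatives at an interior point then vanish, which forces orthogonality to every $\mathrm{diag}(S_g,A)^j[w_0;\bar x_0]$; the converse inclusion follows from the Cayley--Hamilton expansion of the exponential. Since the span is finite-dimensional, finitely many instants already realize it, and adding instants never reduces the rank. This justifies the reduction to $\rank(\mathscr{L}_L\mathscr{R}^{\,c})$.

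\textbf{Step 2: part (a).} The identity~\eqref{eq:obsvmm} must be adapted: in continuous time, differentiating $y$ gives a lower-triangular Toeplitz matrix with entries $D, CB, CAB,\ldots$. This is the same structure as before, because $M_gS_g^\ell = CA^\ell\Pi + \sum_{j<\ell} CA^{\ell-1-j}BL_gS_g^j + DL_gS_g^\ell$ by repeated use of the Sylvester equation. Hence block row operations turn $\mathscr{L}_L$ into
\[
\begin{bmatrix}\Obsvm{L}{L_g,S_g}&0\\ \Obsvm{L}{C,A}\Pi&\Obsvm{L}{C,A}\end{bmatrix}.
\]
If $N_g<L$, this matrix has rank at most $N_g+\rank(\Obsvm{L}{C,A})<L+\rank(\Obsvm{L}{C,A})$, which gives (a).

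\textbf{Step 3: parts (b) and (c).} Decompose $(A,\bar x_0)$ into its controllable part. By the PBH test together with $\sigma(A)\cap\sigma(S_g)=\emptyset$, the pair $(\mathrm{diag}(S_g,A_c),[w_0;\bar x_{0,c}])$ is controllable, so $\mathscr{R}^{\,c}$ has image $\R^{N_g}\times\im(T_c)$ in the original coordinates. Then, exactly as before, $\rank(\mathscr{H}^{\,c}) = L+\rank(\Obsvm{L}{C,A}[\Gamma_2\;\Ctrbm{}{A,\bar x_0}])$. The exceptional set $E$ is contained in $E_2$, which has measure zero by Lemma~\ref{lem:measure0} and translation invariance; this gives (b). For $N_g\ge L+n$, Lemma~\ref{lem:sylsol} gives $\rank\Gamma_2=n$, so $E=\emptyset$ and (c) follows.

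\textbf{Main obstacle.} The step needing the most care is Step 1, where discrete windows are replaced by sampled jets. The analyticity argument on the open interval $(0,T)$ is the delicate point: it is what removes any length condition on $T$ and shows that every $T>0$ suffices. Everything else transfers verbatim from the proof of Theorem~\ref{thm:newFL}.
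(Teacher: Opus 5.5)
Your proposal is correct and follows the paper's route. You reduce informativity to $\rank(\mathscr{L}_L\mathscr{R}^{\,c}) = L+\rank(\Obsvm{L}{C,A})$ via the span of the sampled states, then rerun the proof of Theorem~\ref{thm:newFL} with the full controllability matrix in place of $\Ctrbm{T-L+1}{\cdot}$, which is exactly what the paper means by ``analogous''. Your Step~1 usefully justifies the sampling claim the paper only asserts, with one small fix: derivatives at an interior $t_0$ only give orthogonality to $\mathrm{diag}(S_g,A)^j e^{\mathrm{diag}(S_g,A)t_0}[w_0;\bar x_0]$, so either use that the function is entire (hence its derivatives at $t=0$ vanish too) or use that $e^{\mathrm{diag}(S_g,A)t_0}$ is invertible and commutes with $\mathrm{diag}(S_g,A)$.
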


{\section{Conclusion}}

In this paper, we provide a practical relaxation of Willems’ fundamental lemma.
This was possible by shifting our attention from persistently exciting input trajectories to input signal generators, that is, by returning from behavioral theory to dynamical systems theory.
The proposed result provides a necessary and sufficient condition on the signal generator's dimension for generating informative input–output data for almost all systems and initial conditions. 
Our input design guideline is practically efficient, not only because it provides a simple structure for designing input signals, 
but also because it offers additional input choices beyond the class originally considered by Willems’ fundamental lemma, while remaining useful for almost all systems and initial conditions.

{\section{Acknowledgments}}
The ChatGPT has been used to improve the syntax and grammar of several paragraphs in the manuscript.


\appendix

\section{} 

\begin{lemma}\label{lem:randomPE}
For \(z=[z_1\;z_2\;\cdots\;z_{2K-1}]^\top\in\R^{2K-1}\), define a multi-variate function \(H\) by
\[
    \setlength{\arraycolsep}{2pt}
    H(z)=\!\begin{bmatrix}
        z_1&z_2&\cdots &z_{K}\\
        z_2&z_3&\cdots &z_{K+1}\\
        \vdots&\vdots&\ddots&\vdots\\
        z_{K}&z_{K+1}&\cdots&z_{2K-1}
    \end{bmatrix}.
\]
Then, the set \( \left\{z\in\mathbb{R}^{2K-1}: \rank(H(z))<K \right\} \)
has Lebesgue measure zero.
\end{lemma}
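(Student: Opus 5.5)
The plan is to use the standard fact that the zero set of a nonzero polynomial on \(\R^m\) has Lebesgue measure zero. The set in question is
\[
\left\{z\in\R^{2K-1}: \rank(H(z))<K\right\}=\left\{z\in\R^{2K-1}: p(z)=0\right\}, \qquad p(z)\coloneqq\det(H(z)).
\]
Since every entry of \(H(z)\) is a coordinate of \(z\), the Leibniz expansion shows that \(p\) is a polynomial in \(z_1,\ldots,z_{2K-1}\), homogeneous of degree \(K\). It therefore suffices to show two things: first, that \(p\) is not the zero polynomial; second, that the zero set of a nonzero real polynomial is Lebesgue-null.

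For the first point, I will exhibit one \(z\) with \(p(z)\neq0\). Take \(z_K=1\) and \(z_j=0\) for all \(j\neq K\). Then \((H(z))_{ij}=z_{i+j-1}\) equals \(1\) exactly when \(i+j=K+1\), so \(H(z)\) is the anti-identity (exchange) matrix. Its determinant is \(\pm1\). Hence \(p\not\equiv0\).

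For the second point, I will use the standard induction on the number of variables \(m\) together with Fubini's theorem. For \(m=1\), a nonzero polynomial has finitely many roots. For the inductive step, write
\[
p(z)=\sum_{k=0}^{d} q_k(z_1,\ldots,z_{m-1})\,z_m^k,
\]
where some coefficient \(q_{k^\ast}\) is a nonzero polynomial. By the induction hypothesis, the set where \(q_{k^\ast}=0\) is null in \(\R^{m-1}\). For \((z_1,\ldots,z_{m-1})\) outside that set, the univariate polynomial \(z_m\mapsto p(z)\) is nonzero, so its slice of the zero set is finite and hence null in \(\R\). Fubini's theorem then gives that the zero set of \(p\) is null in \(\R^m\). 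Combining both points, \(\{p=0\}\) has measure zero, which proves the lemma.

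The only step needing real care is the non-vanishing of \(p\), and the exchange-matrix witness settles it immediately. The measure-zero fact for polynomial zero sets is classical and could also simply be cited. The probabilistic statement used in the proof of Lemma~\ref{lem:sgenPErandom} then follows, because the uniform distribution on \([a,b]^{2K-1}\) is absolutely continuous with respect to Lebesgue measure.
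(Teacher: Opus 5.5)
Your proof is correct and matches the paper's proof: it writes the set as the zero locus of the nonzero polynomial \(\det(H(z))\), certified by the same witness \(z_K=1\), \(z_j=0\) otherwise (the exchange matrix), and concludes from the measure-zero property of zero sets of nonzero polynomials. The only difference is that the paper cites that property, while you include its standard induction-plus-Fubini proof, and that proof is sound.
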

\begin{proof}
A function \(f(z) = \det(H(z))\) is a nonzero polynomial function, 
since \(f([0_{1\times (K-1)}\;1\;0_{1\times (K-1)}]^\top)\neq 0\).
Hence, the set \(\left\{z\in\mathbb{R}^{2K-1}: \rank(H(z))<K \right\} = \{z\in\R^{2K-1} : f(z) = 0\}\) has Lebesgue measure zero {\cite{a4}}. 
\end{proof}

\begin{lemma}\label{lem:randomPEunif}
Suppose that the terms of \(u_{[0,2K-2]}\) are 
drawn independently from the uniform distribution on \([a,b]\subset \R\), where \(a<b\).
Then \(H_K(u_{[0,2K-2]})\) has full rank almost surely.
\end{lemma}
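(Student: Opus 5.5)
The plan is to reduce Lemma~\ref{lem:randomPEunif} to Lemma~\ref{lem:randomPE} via absolute continuity of the joint distribution. Collect the samples into the random vector
\[
z=[u(0)\;u(1)\;\cdots\;u(2K-2)]^\top\in\R^{2K-1}.
\]
By construction of the Hankel matrix, \(H_K(u_{[0,2K-2]})=H(z)\), where \(H\) is the map from Lemma~\ref{lem:randomPE}. Hence the event that \(H_K(u_{[0,2K-2]})\) is rank deficient is exactly the event \(z\in Z\), with
\[
Z=\{z\in\R^{2K-1}:\rank(H(z))<K\}.
\]
Lemma~\ref{lem:randomPE} already shows that \(Z\) has Lebesgue measure zero.

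The next step is to identify the law of \(z\). Since the entries are independent and uniform on \([a,b]\), the joint distribution of \(z\) is the product measure. This measure has density \((b-a)^{-(2K-1)}\) times the indicator of the cube \([a,b]^{2K-1}\) with respect to Lebesgue measure on \(\R^{2K-1}\). In particular it is absolutely continuous with respect to Lebesgue measure.

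Therefore
\[
\Pr(z\in Z)=(b-a)^{-(2K-1)}\,\lambda\bigl(Z\cap[a,b]^{2K-1}\bigr)\le (b-a)^{-(2K-1)}\lambda(Z)=0,
\]
so \(H_K(u_{[0,2K-2]})\) has full rank \(K\) with probability \(1\).

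A small point to check is measurability of \(Z\). This holds because \(Z\) is the zero set of the polynomial \(\det(H(z))\), hence closed.

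There is no real obstacle: the substantive work, that the determinant is a nonzero polynomial, is already done in Lemma~\ref{lem:randomPE}. The only care needed is to state that independence gives the product density, so that the Lebesgue-null set transfers to a probability-null set.
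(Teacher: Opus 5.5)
Your proposal is correct and follows the paper's proof essentially step for step: you identify \(H_K(u_{[0,2K-2]})\) with \(H(U)\), invoke Lemma~\ref{lem:randomPE} for the null set, and integrate the uniform product density \((b-a)^{-(2K-1)}\) over \(\mathcal{Z}\cap[a,b]^{2K-1}\) to get probability zero. Your remark on measurability (the zero set of \(\det H\) is closed) is a small detail the paper leaves implicit.
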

\begin{proof}
Let \(U \coloneqq [u(0)\;u(1)\;\cdots\;u(2K-2)]^\top \in \R^{2K-1}\).
Using the multi-variate function \(H\) defined in Lemma~\ref{lem:randomPE},
let \(\mathcal{Z} \coloneqq \left\{z\in\mathbb{R}^{2K-1}: \rank(H(z))<K \right\}\), which has Lebesgue measure zero by Lemma~\ref{lem:randomPE}.
Then, the probability that \(U\) belongs to \(\mathcal{Z}\) is
\[
    \Pr\!\left\{U\in\mathcal{Z}\right\}
    =
    \frac{1}{(b-a)^{2K-1}}
    \int_{\mathcal{Z}\cap[a,b]^{2K-1}} 1 dz
    =
    0.
\]
Thus, \(H(U) = H_K(u_{[0,2K-2]})\) has full rank almost surely.
\end{proof}

\begin{lemma}\label{lem:measure0}
Consider \(A\in\R^{n\times n}\) and \(B\in\R^{n\times 1}\), where \((A,B)\) is controllable.
Then, for almost all \(\bar x_0\in\R^n\),
\[\rank(\Ctrbm{}{A,\bar x_0})=n.\] 
\end{lemma}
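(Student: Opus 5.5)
The plan is to express \(\rank(\Ctrbm{}{A,\bar x_0})\) as the vanishing or nonvanishing of a single polynomial in the entries of \(\bar x_0\). Then I will argue, as in Lemma~\ref{lem:randomPE}, that the zero set of a nonzero polynomial has Lebesgue measure zero~\cite{a4}.

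Define \(f:\R^n\to\R\) by \(f(z)\coloneqq\det(\Ctrbm{}{A,z})=\det\begin{bmatrix} z & Az & \cdots & A^{n-1}z\end{bmatrix}\). Each column \(A^k z\) is linear in \(z\), so \(f\) is a polynomial in the entries of \(z\), homogeneous of degree \(n\). Moreover,
\[
\left\{\bar x_0\in\R^n:\ \rank(\Ctrbm{}{A,\bar x_0})<n\right\}=\{z\in\R^n: f(z)=0\}.
\]

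It remains to show that \(f\) is not the zero polynomial, and here I will use the hypothesis. Since \((A,B)\) is controllable, \(f(B)=\det(\Ctrbm{}{A,B})\neq 0\). Hence \(f\) is a nonzero polynomial, and its zero set is a proper algebraic subset of \(\R^n\), which has Lebesgue measure zero by~\cite{a4}. The complement of this set consists exactly of the \(\bar x_0\) with \(\rank(\Ctrbm{}{A,\bar x_0})=n\), which proves the claim.

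There is no real obstacle. The only point needing care is the nontriviality of \(f\), and the controllability of \((A,B)\) supplies it by giving an explicit witness \(z=B\). The lemma itself only needs \(A\) to be non-derogatory, which controllability by a single input guarantees. If one wanted to avoid citing~\cite{a4}, the measure-zero fact could be proved by a standard induction on \(n\) using Fubini's theorem: for almost every choice of the first \(n-1\) coordinates, the polynomial in the last coordinate is nonzero and so has finitely many roots.
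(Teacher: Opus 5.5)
Your proof is correct and is essentially the paper's own argument. The paper also defines \(f(z)=\det(\Ctrbm{}{A,z})\), uses the controllability of \((A,B)\) to get \(f(B)\neq 0\), and concludes that the zero set of this nonzero polynomial has Lebesgue measure zero.
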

\begin{proof}
For \(z\in\R^n\), define a multi-variate function \(f\) by
\[
    f(z) = \det(\Ctrbm{}{A,z}).
\]
Since \(f(B) \neq 0\), \(f\) is a nonzero polynomial function.
Hence, the set \(\{z\in\R^{n} : f(z) = 0\}\) has Lebesgue measure zero {\cite{a4}} and 
\(\Ctrbm{}{A,z}\) has full rank for almost all \(z\in\R^{n}\).
\end{proof}

\begin{lemma}\label{lem:sylsol}
Suppose that Assumptions \ref{ass:sglg} and~\ref{ass:sgsigma} hold.
Let \(\Pi\in\R^{n\times N_g}\) be the solution of the Sylvester equation~\eqref{eq:syl} 
and let \(\Gamma = \Pi\Obsvm{N_g}{L_g,S_g}^{-1}\).
We partition \(\Gamma\) columnwise as \(\Gamma = \begin{bmatrix} \Gamma_1 & \Gamma_2 \end{bmatrix}\),
where \(\Gamma_1 \in \R^{n\times (N_g-\ell)}\) and \(\Gamma_2 \in \R^{n\times \ell}\) for any \(\ell\in\{1,\ldots,N_g\}\),
with $\Gamma_1$ being understood as the null matrix when $\ell = N_g$.
Then, \(\rank(\Gamma_2)=\rank(\Ctrbm{\ell}{A,B})\).
\end{lemma}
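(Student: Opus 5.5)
The plan is to turn the Sylvester equation~\eqref{eq:syl} into a column recursion for \(\Gamma\). In that recursion the columns of \(\Gamma_2\) are seen to span a Krylov subspace generated by the last column of \(\Gamma\), and that column is a nonsingular function of \(A\) applied to \(B\).

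\textbf{Step 1 (companion form).} Write \(\mathcal{O}\coloneqq\Obsvm{N_g}{L_g,S_g}\), which is invertible by Assumption~\ref{ass:sglg}, and let \(\det(zI-S_g)=z^{N_g}+\xi_{N_g-1}z^{N_g-1}+\cdots+\xi_0\eqqcolon\chi(z)\). By Cayley--Hamilton, \(\mathcal{O}S_g=F\mathcal{O}\), where \(F\) is the companion matrix with ones on the superdiagonal and last row \(-[\xi_0\;\cdots\;\xi_{N_g-1}]\). Also \(L_g=e_1^\top\mathcal{O}\). Substituting \(\Pi=\Gamma\mathcal{O}\) into \(A\Pi+BL_g=\Pi S_g\) and cancelling \(\mathcal{O}\) on the right gives
\[
A\Gamma+Be_1^\top=\Gamma F .
\]

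\textbf{Step 2 (column recursion).} Denote the columns of \(\Gamma\) by \(\gamma_1,\ldots,\gamma_{N_g}\) and compare columns in the identity of Step 1. The first column gives \(A\gamma_1+B=-\xi_0\gamma_{N_g}\). For \(j\ge2\) the \(j\)-th column gives \(A\gamma_j=\gamma_{j-1}-\xi_{j-1}\gamma_{N_g}\). Unrolling backward from \(\gamma_{N_g}\) by induction on \(k\), I expect
\[
\gamma_{N_g-k}=\bigl(A^k+\xi_{N_g-1}A^{k-1}+\cdots+\xi_{N_g-k}I\bigr)\gamma_{N_g},\qquad k=0,\ldots,N_g-1 .
\]
Feeding the case \(k=N_g-1\) into the first-column relation should then yield \(-B=\chi(A)\gamma_{N_g}\).

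\textbf{Step 3 (conclusion).} By Assumption~\ref{ass:sgsigma}, \(\chi(A)\) is invertible, since its eigenvalues are \(\chi(\lambda)\neq0\) for \(\lambda\in\sigma(A)\). Hence \(\gamma_{N_g}=-\chi(A)^{-1}B\). Because \(\chi(A)^{-1}\) commutes with \(A\), we get \(\Ctrbm{\ell}{A,\gamma_{N_g}}=-\chi(A)^{-1}\Ctrbm{\ell}{A,B}\), so \(\rank\Ctrbm{\ell}{A,\gamma_{N_g}}=\rank\Ctrbm{\ell}{A,B}\). By Step 2, the columns of \(\Gamma_2=[\gamma_{N_g-\ell+1}\;\cdots\;\gamma_{N_g}]\) are obtained from \([\gamma_{N_g}\;A\gamma_{N_g}\;\cdots\;A^{\ell-1}\gamma_{N_g}]\) by multiplying on the right with a unit triangular matrix of \(\xi\)-coefficients, after reversing the column order. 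That matrix is invertible, so \(\rank(\Gamma_2)=\rank\Ctrbm{\ell}{A,\gamma_{N_g}}=\rank\Ctrbm{\ell}{A,B}\).

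\textbf{Main obstacle.} The delicate part is bookkeeping: getting the companion-form convention for \(F\) right, so that the recursion runs backward from the last column, and checking that the closing relation produces exactly \(\chi(A)\). I would verify both on the case \(N_g=2\) before writing the general induction.
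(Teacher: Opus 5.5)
Your proposal is correct and follows the paper's proof step for step. The paper likewise rewrites the Sylvester equation as $A\Gamma+B\Lambda=\Gamma\Sigma$, with companion matrix $\Sigma$ and $\Lambda=[1\;0\;\cdots\;0]$, and derives the same backward column recursion ending in $B=-p(A)\gamma_{N_g}$. It then writes $\Gamma_2$ as the Krylov matrix of $(A,\gamma_{N_g})$ times an invertible anti-triangular coefficient matrix $\Xi$, and uses the invertibility of $p(A)$ (from Assumption~\ref{ass:sgsigma}) and the fact that it commutes with $A$ exactly as you do.
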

\begin{proof}
Let \(p(z) \coloneqq \det(zI-S_g)=z^{N_g}+\xi_{N_g-1}z^{N_g-1}+\cdots+\xi_{1}z+\xi_0\). 
Let 
\[
\begin{aligned}
    \Sigma &= \begin{bmatrix}
        0 & 1 & 0 & \cdots & 0 \\[-0.5mm]
        0 & 0 & 1 & \cdots & 0 \\[-1.5mm]
        \vdots & \vdots& \vdots& \ddots & \vdots \\[-0.5mm]
        0 & 0 & 0 & \cdots & 1 \\[-0.5mm]
        - \xi_0 & - \xi_1 & - \xi_2 & \cdots & - \xi_{N_g-1}
    \end{bmatrix}\in\R^{N_g\times N_g},\\
    \Lambda &= \begin{bmatrix}
        1 & 0& 0& \cdots & 0
    \end{bmatrix}\in\R^{1\times N_g}.
\end{aligned}
\]
Then, \(\Lambda \Obsvm{}{L_g,S_g} = L_g\), \(\Sigma \Obsvm{}{L_g,S_g} = \Obsvm{}{L_g,S_g}S_g\), and \(\sigma(\Sigma) = \sigma(S_g)\). 
It follows that
\(A\Gamma + B\Lambda = A\Pi\Obsvm{}{L_g,S_g}^{-1}+BL_g\Obsvm{}{L_g,S_g}^{-1} = \Pi S_g\Obsvm{}{L_g,S_g}^{-1} = \Pi \Obsvm{}{L_g,S_g}^{-1} \Sigma = \Gamma \Sigma\).

Let \(\Gamma = \begin{bmatrix}
    \gamma_1&\gamma_2&\cdots& \gamma_{N_g-1}& \gamma_{N_g}
\end{bmatrix}\). 
By examining each column of \(\Gamma\Sigma\) in turn, we obtain
\[
\begin{aligned}
    \begin{cases}
        A\gamma_1 +B = -\xi_0 \gamma_{N_g},\\
        A\gamma_k = \gamma_{k-1} - \xi_{k-1} \gamma_{N_g},\quad k =2,3,\ldots,N_g.
    \end{cases}
\end{aligned}
\]
Starting from \(\gamma_{N_g-1} = (A + \xi_{N_g-1}I)\gamma_{N_g}\), we obtain
\(\gamma_k = (A^{N_g-k} + \xi_{N_g-1}A^{N_g-k-1} + \cdots + \xi_k I)\gamma_{N_g}\) for \(k = N_g-2 ,\ldots , 2,1\), successively.
Also, \(B = -A\gamma_1  -\xi_0 \gamma_{N_g} = -p(A) \gamma_{N_g}\).
It follows that
\[
\begin{aligned}
    \Gamma_2 
    &= \begin{bmatrix}
        \gamma_{N_g -\ell+1}&\cdots& \gamma_{N_g-1}& \gamma_{N_g}
    \end{bmatrix}\\
    &= \begin{bmatrix}
        \gamma_{N_g} & A\gamma_{N_g} & \cdots & A^{\ell-2}\gamma_{N_g} & A^{\ell-1}\gamma_{N_g}
    \end{bmatrix} \Xi
\end{aligned}
\]
where \(\Xi\) is an invertible matrix such that 
\[
\Xi = \begin{bmatrix}
        \xi_{N_g-\ell+1} & \xi_{N_g-\ell+2} & \cdots & \xi_{N_g-1} & 1\\
        \xi_{N_g-\ell+2} & \xi_{N_g-\ell+3} & \cdots & 1 \\
        \vdots & \vdots & \iddots\\
        \xi_{N_g-1}&1 \\
        1
    \end{bmatrix}.
\]

For a set \(\{\lambda_i\}_{i=1}^n\) of eigenvalues of \(A\), \(\{p(\lambda_i)\}_{i=1}^n\) is a set of eigenvalues of \(p(A)\).
Since \(\sigma(A)\cap \sigma(S_g)=\emptyset\) (Assumption~\ref{ass:sgsigma}), 
\(\det(p(A)) = p(\lambda_1)\cdots p(\lambda_n) \neq 0\), and hence, \(p(A)\) is invertible.
Since \(p(A)\) is invertible and commutes with \(A\), we have \(A = p(A)^{-1} A p(A)\) and \(\gamma_{N_g} = -p(A)^{-1}B\).
Thus, \(\Ctrbm{\ell}{A,\gamma_{N_g}} = -p(A)^{-1} \Ctrbm{\ell}{A,B}\).
Therefore, \(\rank(\Gamma_2) = \rank(\Ctrbm{\ell}{A,\gamma_{N_g}}) = \rank(\Ctrbm{\ell}{A,B})\).
\end{proof}

\end{document}